\theoremstyle{plain}
\newtheorem{df}{Definition}
\newtheorem{thm}{Theorem}
\newtheorem{exa}{Example}
\newtheorem{lemma}{Lemma}
\title{A Representation of Explicit Knowledge and Epistemic Indistinguishability in a Logic of Awareness}
\author{Yudai Kubono$^1$ and Satoshi Tojo$^2$}  
\affil{$^1$\textit{Graduate School of Science and Technology, Shizuoka University,} \par \textit{Ohya, Shizuoka 422-8529, Japan} \par \protect\url{yudai.kubono@gmail.com}}
\affil{$^2$\textit{Department of Data Science, Asia University,} \par \textit{Sakai, Tokyo 180-8629, Japan} \par \protect\url{tojo_satoshi@asia-u.ac.jp}}
\date{\today}  
\begin{document}
\maketitle
\noindent
The final version of this paper will appear in the \textit{Journal of Applied Logics} (\url{https://www.collegepublications.co.uk/ifcolog/}).
\begin{abstract}
The logic of awareness, first proposed by Fagin and Halpern, addressed the problem of logical omniscience by introducing the notion of awareness and distinguishing explicit knowledge from implicit knowledge. In their framework, explicit knowledge was defined as the conjunction of implicit knowledge and awareness, each of which was represented by modal operators. Their definition, however, may derive undesirable propositions that cannot be considered explicit knowledge when Modus Ponens is applied within implicit knowledge. Hence, focusing on indistinguishability among possible worlds, dependent on awareness, we refine the definition of explicit knowledge. In our semantics, we require that the aware implicit knowledge is not necessarily explicit knowledge, though explicit knowledge must be aware as well as implicit. We employ an example of elementary geometry, where different students may or may not reach the final answer, depending on whether they are aware of learned mathematical facts. Thereafter, we formally present the syntax and the semantics of our language, named Awareness-Based Indistinguishability Logic ($\mathrm{AIL}$). We prove that $\mathrm{AIL}$ has more expressive power than the logic of Fagin and Halpern, and show that the latter is embeddable in $\mathrm{AIL}$. Furthermore, we provide an axiomatic system of $\mathrm{AIL}$ and prove its soundness and completeness.
\end{abstract}

\textit{Keywords}: Awareness, Awareness logic, Explicit knowledge, Epistemic logic.

\section{Introduction}

Epistemic Logic (henceforth, $\mathrm{EL}$), as a variant of modal logic, employs the modal operator $K$,  where $K\varphi$ means ``an agent knows $\varphi$.'' However, this $K$ operator causes the well-known problem of \textit{logical omniscience}. For instance, the axioms and inference rules of $\mathrm{EL}$ derive that an agent has knowledge of all tautologies and that when ``an agent knows $\varphi$'' and ``the agent knows $\varphi \to \psi$,'' the agent always knows $\psi$. Many approaches have been proposed to close this gap \cite{fagin1995reasoning}. 
The logic of awareness (henceforth, FH logic\footnote{The logical semantics proposed in \cite{fagin1988belief}, an \textit{awareness structure}, can be given some restrictions, and we can consider several classes of the structures and the corresponding logics.}) \cite{fagin1988belief} addressed the issue of logical omniscience, incorporating the notion of \textit{awareness} to distinguish \textit{implicit}/\textit{explicit} knowledge, where the implicit knowledge behaves as the usual $K$ in $\mathrm{EL}$ while the explicit knowledge is the \textit{aware} implicit knowledge. Even if $\varphi$ is a tautology, if it is not aware by the agent, it does not belong to her explicit knowledge. This concise and intuitive idea has become a mainstay in the literature on logics of awareness \cite{van2015handbook}.

Their definition, however, may derive undesirable propositions that cannot be considered explicit knowledge, when \textit{Modus Ponens}\footnote{From $\varphi$ and $\varphi\to\psi$, $\psi$ is concluded.} ($\mathrm{MP}$) is applied within implicit knowledge. Let $I, A$, and $E$ represent the operators for implicit knowledge, awareness, and explicit knowledge, respectively. We primarily focus on a valid formula in FH logic, $I(\varphi) \wedge I(\varphi \to \psi) \to (A\psi \to E\psi)$, which is entailed from (1): logical omniscience for implicit knowledge $I(\varphi) \wedge I(\varphi \to \psi) \to I\psi$ and (2): the conjunctive definition of explicit knowledge $I\psi \wedge A\psi \leftrightarrow E\psi$. This formula states that the inferred consequent is classified as explicit knowledge whenever the agent is aware of it, even if not aware of the antecedents themselves. On the other hand, one may take the view that when an agent explicitly claims to know a proposition that is derived through inference, it must be derived solely from propositions of which the agent is aware \cite{bergmann2006justification}. On this view, implicit knowledge conjunct with awareness need not suffice for explicit knowledge. 
Hence, this paper revisits the conjunctive definition of explicit knowledge, refining its representation so that $E\varphi\to I\varphi\wedge A\varphi$ but $I\varphi\wedge A\varphi\not\to E\varphi$. The following section highlights the existing gap between the formal representation of the knowledge and its intended interpretation.

Traditionally, there has been a distinction between \textit{awareness of} and \textit{awareness that}; the former restricts the \textit{aware} atomic propositions and is dealt in economics \cite{heifetza2008canonical,halpern2008interactive}, philosophy \cite{grossi2015syntactic,Fernandez-Fernandez2021-FERAIL-2}, and computer science \cite{aagotnes2014logic,lorini2020grounding}, while the latter concerns complex formulae \cite{grossi2015syntactic,fernandez2021awareness,Fernandez-Fernandez2021-FERAIL-2} and furthermore the computable truth with time restriction \cite{fagin1988belief,van2010dynamics}. In this paper, we focus only on \textit{awareness of}, to elucidate the revision from the preceding works.

The rest of this paper is structured as follows. In Section 2, we provide a motivating example and propose a semantic notion to refine the FH logic's definition of explicit knowledge. In Section 3, we formally define the syntax and semantics of our language, introducing a logic named \textit{Awareness-Based Indistinguishability Logic} ($\mathcal{AIL}$). We use $\mathcal{AIL}$ to refer to the set of formulae semantically characterized, i.e., the set of all valid formulae in the class of our models. Thereafter, we demonstrate that our formalization faithfully captures the knowledge in the example introduced in Section 2. In Section 4, we prove that our logic is more expressive than an FH logic, which means that for every formula of the FH logic, there exists a formula in the language of $\mathcal{AIL}$ that has the same meaning, and show that the latter is embeddable into $\mathcal{AIL}$. In Section 5, we provide an axiomatic system \textbf{AIL} in the Hilbert style and prove its soundness and completeness. Section 6 is devoted to related work. Finally, Section 7 concludes with directions for future research.

\section{Examples with Awareness and Indistinguishability}

Before introducing a motivating example, we clarify the notions of awareness and knowledge focused on in this paper.
\begin{description}
  \item[Awareness] This notion concerns which information is available to the agent at a given moment and not the acknowledgement of its truth or falsity; thus, ``to be aware of a proposition'' means ``to be able to refer to a proposition'' or ``to entertain a proposition.''
  \item[Implicit Knowledge] This comprises what an agent has independently of the \allowbreak agent's current awareness. Furthermore, the knowledge closes under logical consequences and includes all tautologies, thus constituting the ideal knowledge of a logically omniscient agent.
  \item[Explicit Knowledge] This comprises what an agent has in the agent's awareness and hence is available for reasoning or decision-making at that moment. This knowledge includes the aspect of availability expressed by the notion of awareness, referring to what the agent actually knows at a given time. 
\end{description}
\noindent
The following example succinctly shows the difference between those notions of knowledge:
\begin{exa}
 An agent knows the color of her smartphone, but it is not at the forefront of her mind, say, while she is reading a cookbook. 
\end{exa}
\noindent
The implicit knowledge becomes available to her when attention is drawn to it, for instance, when someone specifically asks about it.

Since the definition of awareness is given as a subset of propositions in a possible world, the explicit knowledge is only a limitation or a filtered one that originally resides in each world. Therefore, Examples 2 and 3 are inadequate in this context.
\begin{exa}
  One day, you are introduced to a cousin who has never met before. Furthermore, you have not known that you have had such a relative.
\end{exa}
\noindent
Although it is true that he or she is the cousin of ``you'', this cannot be implicit knowledge since the information does not reside in each world.

\begin{exa}
  You are playing a chess game, fully familiar with the rules, and there is an action to checkmate; however, you do not notice it.
\end{exa}
\noindent
In this case, there may be hidden propositions of which ``you'' are not aware; however, even though these atomic propositions are externalized, ``you'' may not be able to find the path to checkmate because of so large a search space. This raises issues of resource-boundedness tied to awareness and, in turn, to explicit knowledge. The notion of awareness discussed in this paper does not aim to capture this kind of case, and we do not pursue such resource-boundedness here. For studies addressing reasoning on explicit belief/knowledge, see \cite{balbiani2016logical,Fernandez-Fernandez2021-FERAIL-2}. Nor do we address this kind of dynamics.

\subsection{A Motivating Example}

\begin{figure}
  \centering
  \includegraphics[width=7.6cm]{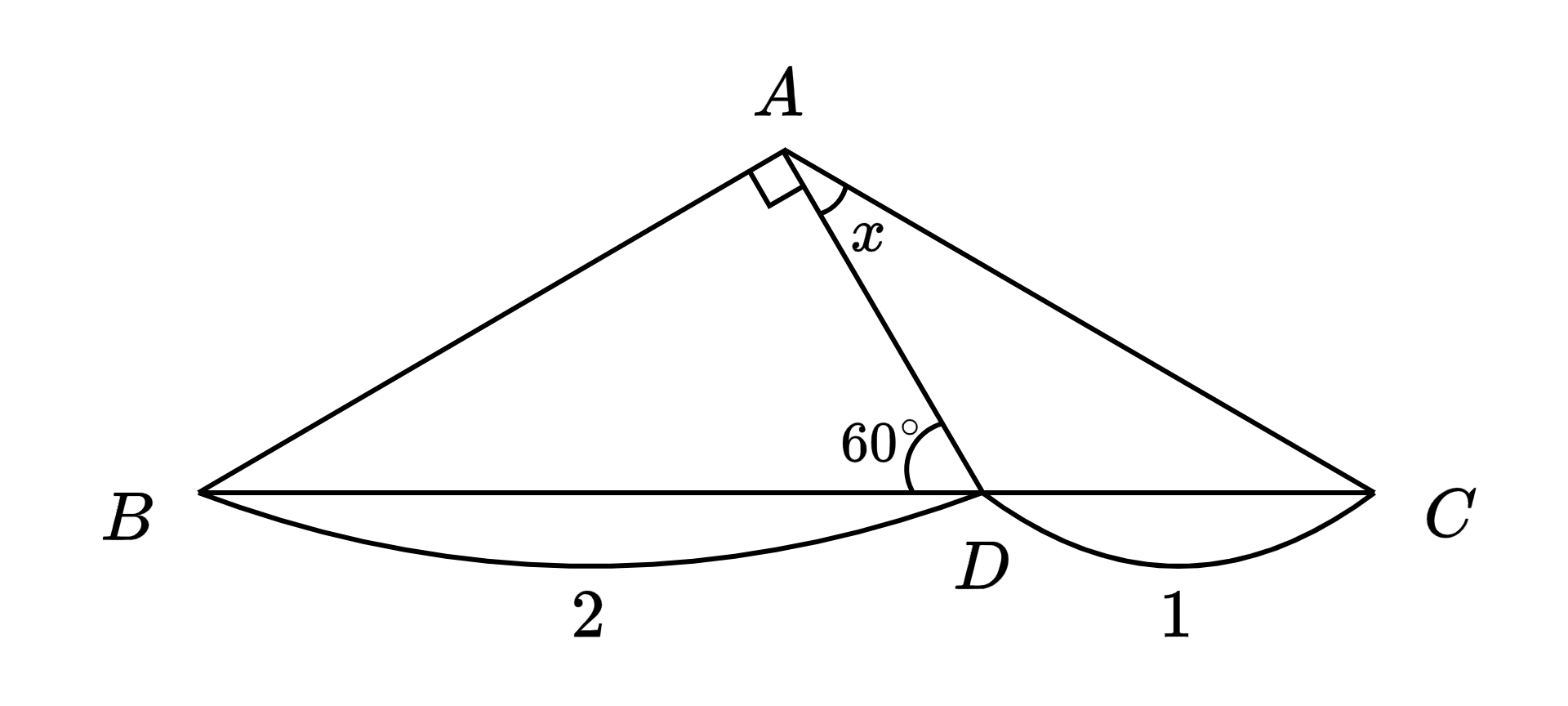}
  \caption{An elementary geometry problem.}
\end{figure}
\noindent

We introduce an elementary geometry problem as a motivating example.
The problem illustrated in Figure 1 can be solved by utilizing mathematical facts, such as those related to the ratios in a $30\tcdegree$-$60\tcdegree$-$90\tcdegree$ triangle and the measure of an exterior angle. 
Specifically, let $f_i$ be the mathematical facts, that is: 
\begin{itemize}
    \item $f_1$: A triangle with angles $30\tcdegree, 60\tcdegree, 90\tcdegree$ has sides in the ratio $1\colon\sqrt{3}\colon 2$ (opposite $30\tcdegree, 60\tcdegree, 90\tcdegree$, respectively);
    \item $f_2$: The measure of an exterior angle is equal to the sum of the measures of the two remote (non-adjacent) interior angles of the triangle;
    \item $f_3$: The measures of angles opposite the equal sides are equal, 
\end{itemize}
$k_i$ given knowledge for $i\in\{1,2,3\}$, and $p_i$ obtained knowledge for $i\in\{1,2,3,4\}$; then the proof in Figure 2 can be constructed.

\begin{figure}
  \centering
\begin{prooftree}
    \AxiomC{$f_1$}
    \AxiomC{$k_1\colon \triangle BDA \text{ is }30\tcdegree$-$60\tcdegree$-$90\tcdegree$ triangle}
    \AxiomC{$k_2\colon BD=2$}
    \TrinaryInfC{$p_1\colon DA=1$}
    \AxiomC{$k_3\colon DC=1$}
    \BinaryInfC{$p_2\colon DA = DC$}
\end{prooftree}
\begin{prooftree}
    \AxiomC{$f_2$}
    \AxiomC{$k_1$}
    \RightLabel{($\dag$)}
    \BinaryInfC{$p_3\colon \angle DAC + \angle DCA = 60\tcdegree$}
    \AxiomC{$p_2$}
    \AxiomC{$f_3$}
    \RightLabel{($\ddag$)}
    \TrinaryInfC{$p_4\colon \angle DAC = 30\tcdegree$}
\end{prooftree}
  \caption{A proof of Figure 1. Note that the given hypotheses or derived facts are shown in the upper row of each horizontal line, and their immediate consequence is shown in its lower row, though the proof format is not based on a known formal system.}
\end{figure} 

We consider two students who differ in awareness and whether each can correctly respond to the question\footnote{The questions ``What is the measure of angle $x$?'' and ``Is angle $x$ $30$ degrees?'' are different. The former suggests no specific value; its declarative form is not a proposition. In contrast, the declarative sentence corresponding to the latter, ``Angle $x$ is $30$ degrees.'', is a proposition. In this running example, agents are required to respond to the question about whether this proposition is true or false simply with ``yes'' or ``no.''}: ``Is angle $x$ $30$ degrees ($p_4$)?''
\begin{exa}
Two students, $a$ and $b$, take a geometry exam and answer the question: ``Is angle $x$ $30$ degrees?'' We assume that their implicit knowledge includes $f_i$\footnote{$f_1$, $f_2$, and $f_3$ are theorems of a geometric theory but not logical truths. In this paper, we do not assume a theory such as Euclidean geometry; therefore, those mathematical facts are not valid formulae.} and further the implications (e.g., $(f_1 \wedge k_1 \wedge k_2)\to p_1$) connecting the upper propositions to the lower one at each inference step in Figure 2. Student $a$ is aware of each $f_i$, whereas $b$ is aware only of $f_1$ and $f_3$. Since $a$ is aware of all the mathematical facts, she knows $p_4$ explicitly. In contrast, $b$ is unaware of $f_2$ and thus unable to derive $p_3$ and $p_4$, which means she cannot respond correctly to the question.
\end{exa}
\noindent

For representing each agent's awareness and knowledge in the example, we denote by $E,I,$ and $A$ the operators for explicit knowledge, implicit knowledge, and awareness, respectively. For a proposition $\varphi$, write $E_i\varphi$ for ``agent $i$ knows $\varphi$ explicitly.'' A subscript of operators is omitted when the confusion does not occur, and $E_i$, for instance, is denoted simply by $E$. Following the definitions in FH logic, we define $E_i\varphi$ as $I_i\varphi\wedge A_i\varphi$ and $I_i\varphi$ as $K_i\varphi$ in $\mathrm{EL}$. 

Note that, although many propositions appear in this example, we pick up the last inference (``$\ddag$'') that derives $p_4$, describing how implicit knowledge, explicit knowledge, and awareness are associated with the relevant propositions. Furthermore, let us emphasize that the focus of this discussion is on the agents' knowledge and awareness after their reasoning process is complete. We do not address dynamics such as becoming aware of new propositions through reasoning.

Agent $a$, who can construct the proof in Figure 2 in her mind, is aware of all the propositions; $b$, who stops the inference at ``$\dag$'', is unaware of $p_3$. Accordingly, each agent's awareness is expressed as follows: 
\begin{itemize}[leftmargin=17.5mm]
  \item[(1)] $a$ is aware of $p_2,p_3,f_3,p_4$ ($A_a p_2, A_a p_3, A_a f_3, A_a p_4)$; 
  \item[(2)] $b$ is aware of $p_2,f_3,p_4$ ($A_b p_2, \neg A_b p_3, A_b f_3, A_b p_4)$.
\end{itemize}

As for implicit knowledge, the agents have the following knowledge of the basics of elementary geometry from our assumption:
\begin{itemize}[leftmargin=17.5mm]
  \item[(3)] $a$ and $b$ know $f_3$ implicitly ($I_a f_3$, $I_b f_3$);
  \item[(4)] $a$ and $b$ implicitly know that if $p_2$, $p_3$, and $f_3$ hold, then $p_4$ holds, too ($I_a ((p_2\wedge p_3\wedge f_3)\to p_4)$, $I_b ((p_2\wedge p_3 \wedge f_3)\to p_4)$).
\end{itemize}

Since both agents see the diagram and recognize each side and angle, each $k_i$ belongs to their implicit knowledge. Hence, they also implicitly know $p_2$ and $p_3$. For it relies, eventually, solely on three assumptions: (i) Each $f_i$ and $k_i$ belongs to their implicit knowledge; (ii) the implication involved at each inference step itself belongs to their implicit knowledge; (iii) implicit knowledge is closed under implication.
\begin{itemize}[leftmargin=17.5mm]
  \item[(5)] $a$ and $b$ implicitly know both $p_2$ and $p_3$ ($I_a p_2, I_a p_3$, $I_b p_2, I_b p_3$).
\end{itemize}
\noindent

Furthermore, since implicit knowledge is closed under implication, 
\begin{itemize}[leftmargin=17.5mm]
  \item[(6)] $a$ and $b$ know $p_4$ implicitly ($I_a p_4$, $I_b p_4$).
\end{itemize}
\noindent

(1), (2), and (6) derive $E_a p_4$ and $E_b p_4$ from the definition of explicit knowledge.
\begin{itemize}[leftmargin=17.5mm]
  \item[(7)] $a$ knows $p_4$ explicitly ($E_a p_4$).
  \item[(8)] $b$ knows $p_4$ explicitly ($E_b p_4$).
\end{itemize}

While the explicit knowledge (7) for $a$ is consistent with the actual one, the knowledge (8) for $b$ contradicts her available knowledge. The implicit knowledge of $p_4$ is justified through an inference from $p_2$, $p_3$, and $f_3$. Since $b$ is unaware of $p_3$, this justification is not available within her awareness. Agent $b$ thereby excludes $p_4$ from her explicit (available) knowledge. In other words, $b$, who is unaware of the required proposition $p_3$ to solve the problem, does not explicitly know that $p_4$ is true ($\neg E_b p_4$).

This resulting tension between the definition and intuition relates to the valid formula in FH logic, $I(\varphi) \wedge I(\varphi \to \psi) \to (A\psi \to E\psi)$. The formula is incoherent with the view that unless an agent is aware of all the propositions necessary to infer the consequence, it cannot be considered that she knows the consequence explicitly. For instance, for a system, we may take the deductive closure of its knowledge base as implicit knowledge and the information currently in the system's focus as awareness. Its explicit knowledge should be what is obtained through inference using propositions within the current focus, rather than what belongs merely to the deductively closed knowledge set currently focused on. This observation suggests that, in some cases, explicit knowledge cannot be captured merely by filtering unaware propositions out of implicit knowledge. 

\subsection{Awareness and Indistinguishability}
The semantics of $\mathrm{EL}$ is presented by a Kripke model which consists of a non-empty set of possible worlds where each world has a different valuation for propositions, to which each agent may or may not be accessible. The knowledge is defined as a true proposition in all of an agent's accessible worlds. This accessibility is understood to express epistemic indistinguishability among possible worlds. For instance, an agent does not know that $\varphi$ is true ($\neg K_i \varphi$) if and only if $\varphi$ is true not in all her epistemically indistinguishable worlds (epistemic alternatives), which means that she cannot distinguish between the world where $\varphi$ is true and the world where not.
 
The notion of awareness induces its own form of indistinguishability among possible worlds, as demonstrated in the following figure. Let $p$ and $q$ be propositions. Figure 3 illustrates a possible world model where circles express possible worlds, propositions listed above each world true propositions in the world, and a proposition within curly brackets an agent's aware proposition.
 \begin{figure}
   \centering
   \includegraphics[width=4.5cm]{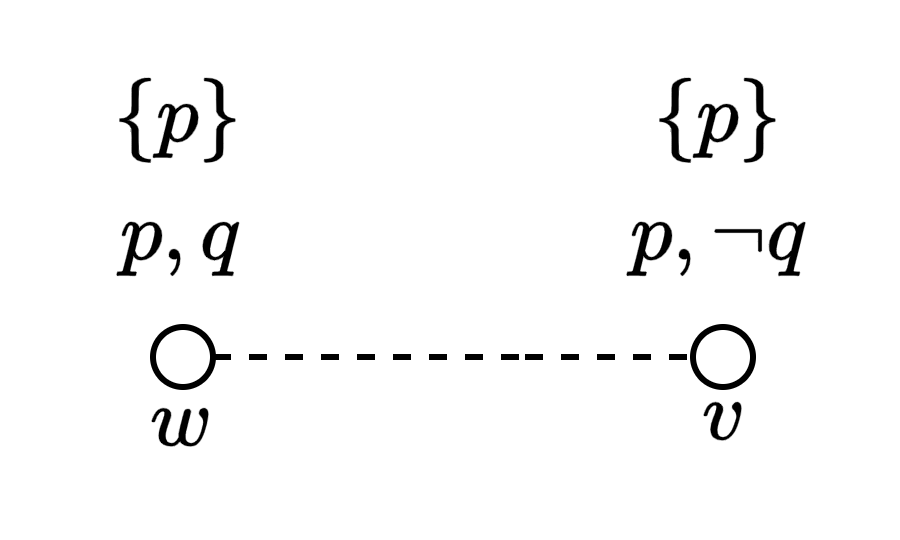}
   \caption{A possible world model where a set of possible worlds is $\{w,v\}$.}
 \end{figure}
 \noindent
The only difference between $w$ and $v$ is the truth value of $q$, and the agent is unaware of $q$ in both worlds. In this situation, she views $w$ and $v$ as the same and does not distinguish between them due to a lack of awareness, which is represented by the dotted line. We call this kind of indistinguishability specifically \textit{awareness-indistinguishability} and show that it plays an important role in the formalization of explicit knowledge.

We consider which possible worlds for each agent are epistemically indistinguishable in the motivating example. Figure 4 illustrates a possible world model for Example 4, depicting three possible worlds and indistinguishability among them. For readability, we only include the four propositions $p_2,p_3, f_3$, and $p_4$ involved in the last inference (``$\ddag$''). The actual world is $w$, indicated by the double circle. Each curly bracket with a subscript of an agent indicates the agent's aware propositions.
\begin{figure}
   \centering
   \includegraphics[width=10cm]{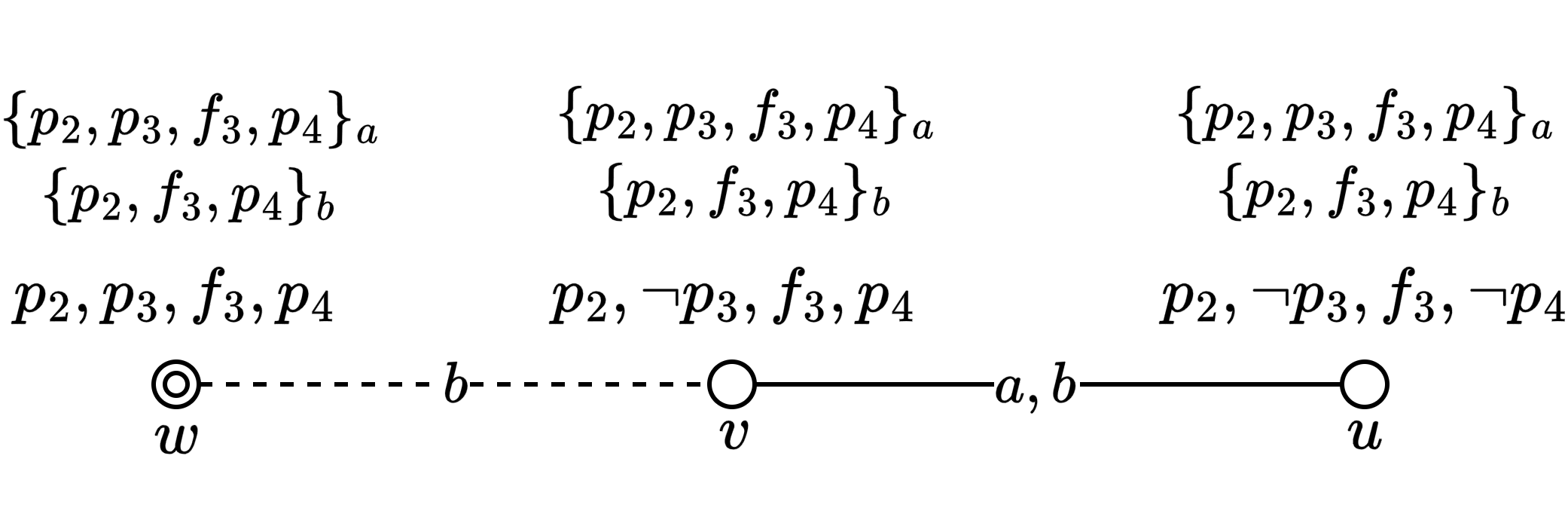}
   \caption{A possible world model where a set of possible worlds is $\{w,v,u\}$.}
 \end{figure}
 \noindent
Since $b$ is unaware only of $p_3$, she does not distinguish between $w$ and $v$. This corresponds to awareness-indistinguishability, attributed to her lack of awareness, represented by the dotted line labeled with $b$ in the figure. Furthermore, in $v$ (where $p_3$ is false), neither agent knows $p_4$ even implicitly because of the absence of ``$\ddag$'', being unable to distinguish the world from $u$ (where $p_4$ is false). This kind of indistinguishability, represented by the solid line labeled with $a$ and $b$ in the figure, expresses epistemic alternatives independent of the lack of awareness (in fact, these are still indistinguishable even under $a$'s complete awareness). Note that since each agent has no way of knowing the other's awareness, we may consider the worlds differing only in the other's awareness from each of $w, v$, and $u$. However, we omit these, as, considering the presently discussed knowledge, they do not affect the description of agents' epistemic situations.

In the figure, the two kinds of indistinguishability jointly express that $b$ cannot distinguish which world is the actual world among worlds, $w$, $v$ (both in which $x = 30\tcdegree$), and $u$ (in which not), and, therefore, epistemic indistinguishability of $b$ with incomplete awareness. The truth of knowledge is evaluated over all epistemic alternatives from the agent's perspective. The observation of Example 4 suggests that the epistemic alternatives for explicit knowledge should be presented through epistemic indistinguishability incorporating awareness-indistinguishability. Therefore, revisiting the conjunctive account of explicit knowledge ($E\varphi\leftrightarrow I\varphi\wedge A\varphi$), in the next section, we provide a formal definition of explicit knowledge through the composed epistemic indistinguishability and show that the formalization refines that in FH logic. 

\section{Awareness-Based Indistinguishability Logic}
In this section, we formally define the syntax and semantics of our language, introducing our logic named \textit{Awareness-Based Indistinguishability Logic} ($\mathcal{AIL}$). Note that when written in calligraphic letters, $\mathcal{AIL}$ refers to the set of formulae that are valid at every model. After showing the properties of the operators, we demonstrate that the formalization faithfully captures the knowledge in the example introduced in the previous section.
\subsection{Language}
\begin{df}
Let $\mathcal{P}$ be a countable set of atomic propositions and $\mathcal{G}$ a finite set of agents. The language $\mathcal{L}_{AIL}(\mathcal{P,G})$ is the set of formulae generated by the following grammar: 
  \begin{align*}
    &\mathcal{L}_{AIL}(\mathcal{P,G}) \ni\varphi::=  p \mid \neg\varphi \mid \varphi\wedge\varphi\mid A_i \varphi \mid I_i\varphi \mid E_i \varphi \mid [\mathop{\approx}]_i\varphi \mid [\circ^+]_i\varphi,
  \end{align*}
  where $p \in \mathcal{P}$ and $i\in\mathcal{G}$. Other logical connectives $\vee$, $\to$, and $\leftrightarrow$ are defined in the usual manner.
\end{df} 

The operators $A$, $I$, and $E$ represent awareness, implicit knowledge, and explicit knowledge operators, respectively.
\begin{itemize}
 \item[$\bullet$] $A_i\varphi$ reads ``agent $i$ is aware of $\varphi$.''
 \item[$\bullet$] $I_i\varphi$ reads ``agent $i$ knows $\varphi$ implicitly.''
 \item[$\bullet$] $E_i\varphi$ reads ``agent $i$ knows $\varphi$ explicitly.'' 
\end{itemize}
\noindent
The other operators $[\mathop{\approx}]_i$ and $[\circ^+]_i$ are technical background ones. These are used to define explicit knowledge and play an important role in the proof of the completeness theorem in Section 5. Their detailed interpretations are explained in the following subsection. 

\subsection{Semantics}
The semantics is presented in a Kripke style, primarily influenced by an awareness structure \cite{fagin1988belief} and the equivalence relation proposed in \cite{yudai2022-1}.
\begin{df}
  An \textit{epistemic model with awareness} $M$ is a tuple $\langle W, \{\mathop{\sim_i}, \mathscr{A}_i\}_{i\in\mathcal{G}}, V\rangle$, where:
     \begin{itemize}
      \item[$\bullet$] $W \text{ is a non-empty set of possible worlds}$;
      \item[$\bullet$] $\mathop{\sim_i}
      \text{ is an equivalence relation on }W$;
      \item[$\bullet$] $\mathscr{A}_i: W \to 2^{\mathcal{P}}$ is an awareness function satisfying that if $(w,v)\in\mathop{\sim_i}$, then $\mathscr{A}_i(w) = \mathscr{A}_i(v)$;
      \item[$\bullet$] $V : \mathcal{P} \to 2^{W}$ is a valuation.
     \end{itemize}
   \end{df}
\noindent
This model consists of a Kripke model in $\mathcal{S}5$ and awareness functions\footnote{This paper adopts $\mathcal{S}5$ (defining accessibility as an equivalence relation) as the foundation of models, for its tractability and widespread use in applications \cite{fagin1995reasoning}. At the same time, $\mathcal{S}5$ has faced criticism for its introspection for knowledge \cite{williamson1992inexact}. However, the primary focus of this paper, as well as that of awareness logic in general, is to tackle the problem of logical omniscience, rather than the idealization of information processing associated with introspection. Using $\mathcal{S}5$ as the foundation allows us to isolate awareness from errors in information processing induced by lack of introspection, enabling a transparent analysis that highlights the role of awareness. Further details for introspection appear in Subsection 3.4.}. In $\mathrm{EL}$, the equivalence relations are simply called accessibility relations and define implicit knowledge. In our logic, we call it \textit{IK-accessibility relation} to avoid confusion because some types of `accessibilities' appear for defining explicit knowledge. The awareness function maps a world into a subset of atomic propositions which an agent is aware of at the world, and this subset $\mathcal{A}_i(w)$ is called an \textit{awareness set} for the agent $i$. The restriction that awareness sets are equal across accessible worlds on $\sim_i$ ensures that agent $i$ always knows her own awareness (awareness introspection), that is, $A_i\varphi\to I_i A_i\varphi$. We refer to this restriction as \textbf{ka} (knowledge implies awareness), following \cite{van2015handbook}.

Moreover, we define another binary relation on $W$. The relation is associated with each agent's awareness set, which is specified by an awareness function, and, therefore, is model-dependent.
\begin{df}
  For each $i\in\mathcal{G}$, \textit{A-equivalence relation} $\mathop{\approx_i}$ on $W$ is defined by $(w,v) \in \mathop{\approx_i} \textit{\  iff,\ }$ $\mathscr{A}_i(w) = \mathscr{A}_i(v)$ and $w \in V(p) \textit{\ iff \ } v \in V(p) \text{ for every } p \in \mathscr{A}_i(w)$.
\end{df}
\noindent
This relation groups possible worlds with different valuations only for unaware propositions and expresses awareness-indistinguishability. The relation is also an equivalence relation, and the equivalence classes collect worlds regarded as `the same' from the agent's viewpoint at a given moment (Figure 5), in other words, as the same states for her current considering propositions. In model $M$ in the figure, since proposition $q$ is not in $i$'s vocabulary, the subjective description of the agent's epistemic situation should be $M'$ in which each equivalence class in $M$ is viewed and `simplified' as one possible world. We call an equivalence class by $\mathop{\approx_i}$, which is an element of the domain of the quotient model of $M$ by $\mathop{\approx_i}$, a \textit{simplified world for $i$}. Denoting an equivalence class of $w$ by relation $R_i$ by $[w]_{R_i}$, we can express a simplified world with respect to $w$ for $i$ as $[w]_{\mathop{\approx_i}}$.
The restriction that awareness sets are equal across reachable worlds on $\approx_i$ ensures a property about $i$'s knowledge of her own awareness, as well as that on $\sim_i$ does.

\begin{figure}[t]
  \centering
  \includegraphics[width=\textwidth]{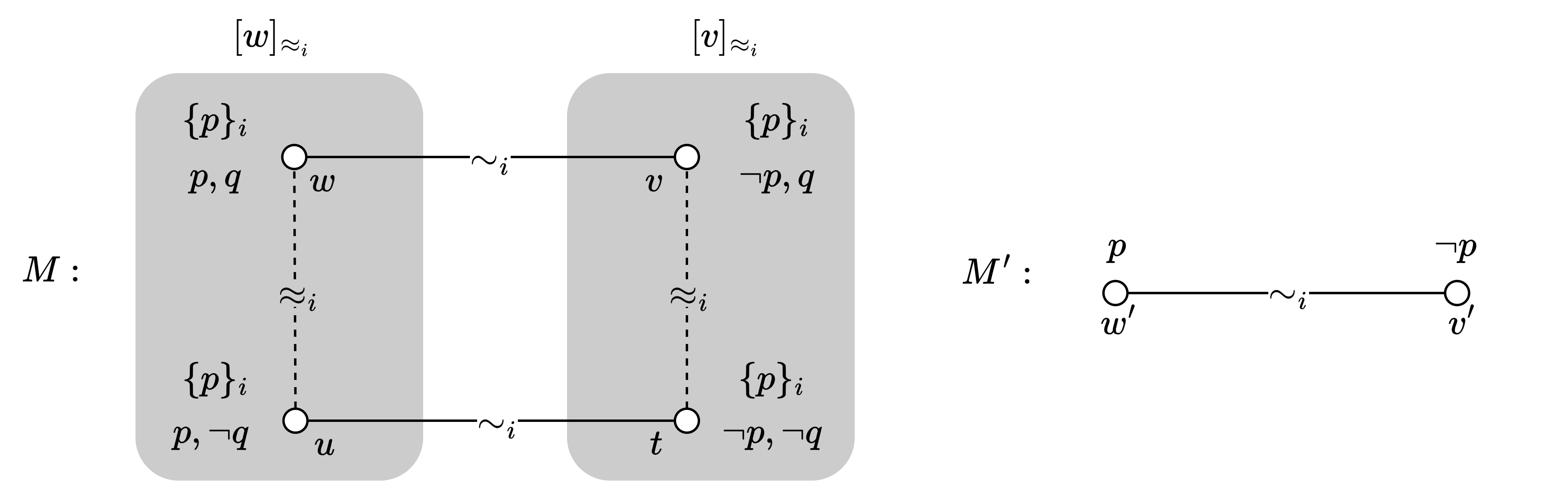}
  \caption{Two possible world models. Epistemic model with awareness $M$ depicts equivalence classes by $\mathop{\approx_i}$. Agent $i$ views $w$ and $u$, and $v$ and $t$ as having no difference due to her lack of awareness of $q$. A Kripke model $M'$ is the model only for $i$'s aware proposition $p$. For readability, we omit the reflexivity.}
\end{figure}

We express epistemic indistinguishability for explicit knowledge by composing IK-accessibility (indistinguishability independent of awareness) with awareness-indistinguishability (indistinguishability dependent on awareness). This resulting relation is called \textit{EK-accessibility relation} and is achieved by taking the transitive closure of the composition of the two relations, $\sim_i$ and $\approx_i$. Formally, the relation is defined as $(\mathop{\sim_i} \circ\mathop{\approx_{i}})^+$, where $\mathop{\sim}_i \circ\mathop{\approx}_i$ the sequential composition of $\mathop{\approx}_i$ and $\mathop{\sim}_i$, for a binary relation $R$, $R^+$ denotes the transitive closure of $R$, and the closure $R^+$ is the smallest set such that $R\subseteq R^+$ and for all $x, y, z$, if $(x,y)\in R^+$ and $(y,z)\in R^+$, then $(x,z)\in R^+$. As shown in Lemma 1, this closure is also an equivalence relation on $W$ and partitions $W$ more coarsely (Figure 6).

\begin{figure}
  \centering
  \includegraphics[width=14cm]{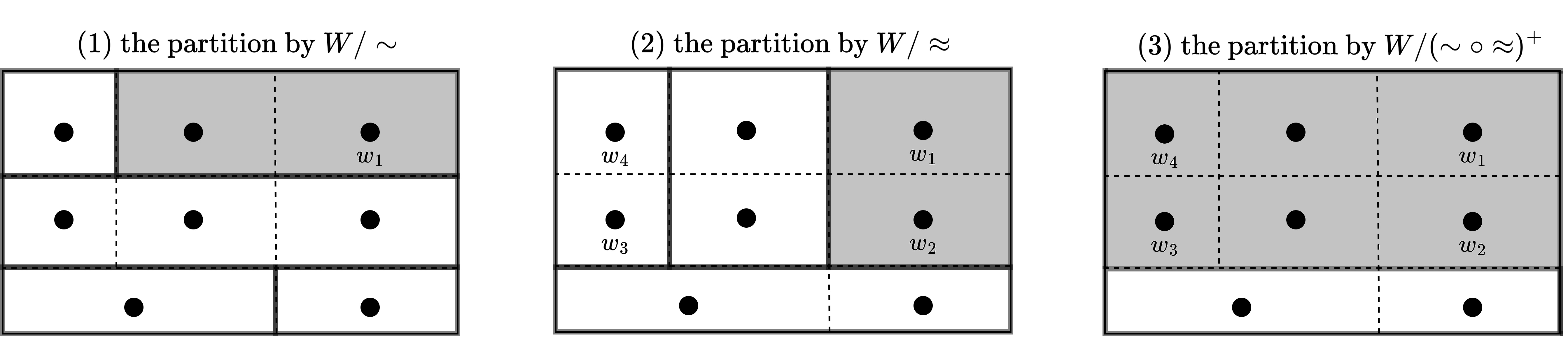}
  \caption{Partitions of a set $W$ consisting of eight worlds by IK-accessibility, A-equivalence, and EK-accessibility relations. Given $[w_1]_{\sim_i}$ and $[w_1]_{\approx_i}$ (see (1) and (2)), $w_2$ in $[w_1]_{\approx_i}$ belongs to the partition by $(\sim_i \circ\approx_i)^{+}$. Then, $w_3$ is in this connected partition as $w_3\sim_i w_2$. Transitively, $w_4$ is also in it as $w_3 \approx_i w_4$. The final partition becomes as (3).
  }
\end{figure}

\begin{lemma}
  Let $\mathop{\sim_i}$ be IK-accessibility relation and $\mathop{\approx_{i}}$ A-equivalence relation. EK-accessibility relation $(\mathop{\sim_i} \circ\mathop{\approx_{i}})^+$ is an equivalence relation.
\end{lemma}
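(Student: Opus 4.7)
The plan is to verify the three defining properties of an equivalence relation for $R^+ := (\mathop{\sim_i}\circ\mathop{\approx_i})^+$, where I set $R := \mathop{\sim_i}\circ\mathop{\approx_i}$. I will use throughout that both $\mathop{\sim_i}$ and $\mathop{\approx_i}$ are equivalence relations, so in particular each is reflexive and symmetric.

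\textbf{Reflexivity.} For any $w\in W$, by reflexivity of $\mathop{\sim_i}$ and $\mathop{\approx_i}$, one has $w\mathop{\sim_i} w\mathop{\approx_i} w$, so $(w,w)\in R\subseteq R^+$. This is immediate and needs no further argument.

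\textbf{Transitivity.} This is built into the notion of transitive closure as recalled in the paragraph preceding the lemma: $R^+$ is defined to be the smallest relation containing $R$ and closed under transitivity, so transitivity requires no proof beyond citing the definition.

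\textbf{Symmetry.} This is the only non-trivial part, since $R$ itself need not be symmetric (the relation $\mathop{\sim_i}\circ\mathop{\approx_i}$ differs in general from $\mathop{\approx_i}\circ\mathop{\sim_i}$). My first step will be to prove the one-link case: if $(w,v)\in R$, then $(v,w)\in R^+$. Unpacking, $(w,v)\in R$ gives some $u$ with $w\mathop{\sim_i}u$ and $u\mathop{\approx_i}v$; by symmetry of each relation, $v\mathop{\approx_i}u$ and $u\mathop{\sim_i}w$. The trick is to pad with reflexive loops: using $v\mathop{\sim_i}v\mathop{\approx_i}u$ we get $(v,u)\in R$, and using $u\mathop{\sim_i}w\mathop{\approx_i}w$ we get $(u,w)\in R$. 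Hence $(v,w)\in R\circ R\subseteq R^+$. The general case then follows by induction on the length of a witnessing $R$-chain from $w$ to $v$: reverse the chain link-by-link via the one-link case, and glue the reversed links together using transitivity of $R^+$.

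The only conceptual obstacle is the asymmetry of $R$ just noted; the padding trick above, which crucially relies on reflexivity of both $\mathop{\sim_i}$ and $\mathop{\approx_i}$, neutralizes it by turning each backward step into a length-two forward step inside $R^+$. Everything else is bookkeeping.
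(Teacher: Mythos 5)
Your proof is correct and follows essentially the same route as the paper's: reflexivity and transitivity come directly from the reflexivity of the component relations and the definition of transitive closure, and symmetry is handled by reversing each single link of $\mathop{\sim_i}\circ\mathop{\approx_i}$ via exactly the same reflexive-padding trick (turning one backward link into two forward links) and then gluing the reversed links by transitivity of the closure. The only difference is a harmless notational one: you read $\mathop{\sim_i}\circ\mathop{\approx_i}$ as ``$\mathop{\sim_i}$ first,'' whereas the paper defines it as ``$\mathop{\approx_i}$ first,'' but the argument is identical either way.
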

\begin{proof}
  For the reflexivity and the transitivity, the proof is straightforward because both $\mathop{\sim}_i$ and $\mathop{\approx}_i$ are equivalence relations and $(\mathop{\sim}_i\circ\mathop{\approx}_i)^+$ is closed by transitivity. For the symmetricity, It is sufficient to prove, for every $w,v\in W$, if $(w,v)\in(\mathop{\sim}_i\circ\mathop{\approx}_i)^+$, then $(v,w)\in(\mathop{\sim}_i\circ\mathop{\approx}_i)^+$. It follows from the assumption that there exists $u_1,\dots,u_n\in W$ such that $(w,u_1)\in\mathop{\sim}_i\circ\mathop{\approx}_i$, $(u_1,u_2)\in\mathop{\sim}_i\circ\mathop{\approx}_i$, $\cdots$, $(u_{n-1},u_n)\in\mathop{\sim}_i\circ\mathop{\approx}_i$, and $(u_n,v)\in\mathop{\sim}_i\circ\mathop{\approx}_i$. In addtion, for each pair, we take $(w,u_1)$ as an example, there exists $t\in W$ such that $(w,t)\in \mathop{\approx}_i$ and $(t,u_1)\in \mathop{\sim}_i$. Since both relations are equivalence relations, $(u_1,u_1),(t,w)\in \mathop{\approx}_i$, $(u_1,t),(w,w)\in \mathop{\sim}_i$, and $(u_1,t),(t,w)\in\mathop{\sim}_i\circ\mathop{\approx}_i$. Thus, there exists $t_i\in W$, where $1\leq i\leq n+1$, such that $(u_1,t_1),(t_1,w)\in\mathop{\sim}_i\circ\mathop{\approx}_i$, $(u_2,t_2),(t_2,u_1)\in\mathop{\sim}_i\circ\mathop{\approx}_i$, $\cdots$, $(u_n,t_n),(t_n,u_{n-1})\in\mathop{\sim}_i\circ\mathop{\approx}_i$, and $(v,t_{n+1}),(t_{n+1},u_n)\in\mathop{\sim}_i\circ\mathop{\approx}_i$. Therefore, by the transitivity, $(v,w)\in(\mathop{\sim}_i\circ\mathop{\approx}_i)^+$.
\end{proof}

However, in mathematical authenticity, these two relations, $\mathop{\sim}_i$ and $\mathop{\approx}_i$, belong to different concepts. In order to avoid a misleading formalization, we may regard that (i) $\mathop{\approx}_i$ is an $\epsilon$-closure of automata, where we can freely move to different states (worlds) without any computational procedure \cite{sipser2006}. Otherwise, (ii) we may regard that the set of possible worlds $W$ is re-modelled to its quotient space $W/\mathop{\approx}_i$ and after that we may introduce the equivalence $\mathop{\sim}_i$. Nevertheless, because awareness-indistinguishability is static and does not involve any update or re-modelling in epistemic states of agents, we adopt the present formalization.

The satisfaction relation is defined by a pointed model, a pair $(M,w)$ of a model and a world. Let $At(\varphi)$ denote the set of atomic propositions occurring in $\varphi$. 
\begin{df}
  For each epistemic model with awareness $M$ and possible world $w \in W$, the satisfaction relation $\vDash_{AIL}$ is given as follows: 
  \begin{align*}
    M,w \vDash_{AIL} p &\textit{\  iff  \ } w \in V(p) ;\\[-3pt] 
    M,w \vDash_{AIL} \neg \varphi &\textit{\  iff  \ } M,w \nvDash_{AIL}\varphi;\\[-3pt]
    M,w \vDash_{AIL} \varphi\wedge\psi &\textit{\  iff  \ } M,w\vDash_{AIL}\varphi \text{, and } M,w\vDash_{AIL}\psi ; \\[-3pt] 
    M,w \vDash_{AIL} A_{i} \varphi &\textit{\  iff  \ } At(\varphi) \subseteq \mathscr{A}_{i}(w);\\[-3pt]
    M,w \vDash_{AIL} I_i\varphi &\textit{\  iff  \ } M,v\vDash_{AIL} \varphi \text{ for all } v \text{ such that }(w,v)\in \mathop{\sim_i};\\[-3pt]
    M,w\vDash_{AIL} [\approx]_{i}\varphi &\textit{\  iff  \ } M,v\vDash_{AIL}\varphi \text{ for all } v \text{ such that }(w,v)\in \mathop{\approx_{i}};\\[-3pt] 
    M,w\vDash_{AIL} [\circ^+]_{i}\varphi &\textit{\  iff  \ } M,v\vDash_{AIL} \varphi \text{ for all } v\text{ such that } (w,v) \in (\mathop{\sim_i} \circ\mathop{\approx_{i}})^+;\\[-3pt]
    M,w \vDash_{AIL} E_{i} \varphi &\textit{\  iff  \ } M,w \vDash_{AIL} A_{i}\varphi  \text{ and }M,w\vDash_{AIL} [\circ^+]_{i}\varphi.
  \end{align*}
\end{df}
\noindent
When the confusion does not occur, $\vDash_{AIL}$ is denoted simply by $\vDash$.

Each modal operator corresponds to a distinct structural component of the model. In what follows, we summarize their formal definitions and provide intuitive explanations.
\begin{itemize}
  \item By the definition, the awareness of $\varphi$ ($A_i \varphi$) is reduced to awareness for the atomic propositions occurring in $\varphi$ and then interpreted. For instance, ``an agent is aware of $\varphi \wedge \psi$'' if and only if ``she is aware of $\varphi$ and $\psi$.'' This interpretation with the restriction called \textbf{gpp} (awareness generated by primitive propositions) \cite{van2015handbook} provides a representation of awareness in the sense of \textit{awareness of}.
  \item The $[\mathop{\approx}]_i$ operator refers to a true proposition in all $i$'s awareness-indistinguish- able worlds or a simplified world for $i$. $[\mathop{\approx}]_i\varphi$ is true at world $w$ if and only if $\varphi$ is true in all worlds in which each aware proposition has the same truth value as in $w$, unaware propositions disregarded. A formula with this modality added can be understood as the formula with the modality removed in a simplified world. For, $[\mathop{\approx}]_i\varphi$ holds at $w$ precisely when $\varphi$ is true in the simplified world with respect to $w$ for $i$. In other words, this operator refers to equivalence classes representing an `abstraction' of the given model \cite{udatsu}. A deeper analysis of the operator lies beyond our present scope and is left for future work.
  \item The $[\circ^+]_i$ operator refers to a true proposition in all $i$'s indistinguishable worlds in terms of those two kinds of indistinguishability, that is, on the EK-accessibility relation. Using the composition instead of the union makes $[\circ^+]_i$ be equivalent to infinite iteration of $I_i [\mathop{\approx}]_i$. Although a formula with infinite iteration is not well-formed, this observation serves to prove the completeness theorem. 
  \item Implicit knowledge is interpreted similarly to $K$ of $\mathrm{EL}$, through IK-accessibility relation. On the other hand, explicit knowledge is interpreted by an agent's awareness and her epistemically indistinguishable worlds referred to by $[\circ^+]_i$ operator, not by $I_i$ operator.
\end{itemize}
 
Our logic does not validate $I_i\varphi\wedge A_i\varphi \to E_i\varphi$, unlike FH logic. For example, consider model $M$ in Figure 5. Propositions $A_i p$, $I_i p$, and $[\mathop{\approx}]_i p$ hold at $w$ but not $[\circ^+]_i p$. Therefore, $E_i p$ does not hold at $w$. Table 1 outlines the differences in its formalizations. We will discuss some properties of our operators in Subsection 3.4. 

\begin{table}
  \caption{The outline of the differences in formalizations between an FH logic and our logic. The double-lined arrow $:\Longleftrightarrow$ with a colon indicates ``is defined by.'' Note that for readability, some details of the conditions, such as a model and world, are omitted.}	
  \centering
\begin{tabular}{|c|c|c|}
  \hline
 & FH Logic\footnotemark[6] & Our Proposal\\\hline
 Awareness & \multicolumn{2}{|c|}{$A_i\varphi$ $:\Longleftrightarrow$ $At(\varphi)\in \mathscr{A}_i(w)$}\\
 Implicit Knowledge & \multicolumn{2}{|c|}{$I_i\varphi$ $:\Longleftrightarrow$ $w \sim_i v$}\\\hline
   A-equivalence && $[\approx]_i\varphi$ $:\Longleftrightarrow$ $w \approx_i v$\\ 
   EK-accessibility &&$[\circ^+]_i\varphi$ $:\Longleftrightarrow$ $w (\mathop{\sim_i} \circ\mathop{\approx_{i}})^+ v$\\\hline
 Explicit Knowledge & $E_i\varphi:\Longleftrightarrow A_i\varphi \wedge I_i\varphi$ & $E_i\varphi:\Longleftrightarrow A_i\varphi \wedge [\circ^+]_i\varphi$\\\hline 
\end{tabular}
\end{table}

The validity is defined as usual.
\begin{df}
  A formula $\varphi$ is valid at $M$ if $\varphi$ holds at every pointed world in $M$, which is denoted by $M\vDash\varphi$. A formula $\varphi$ is valid if $\varphi$ holds at every pointed model, which is denoted by $\vDash \varphi$. 
\end{df}

\subsection{An Explanation with Dynamic Operators}
\footnotetext[6]{A FH logic with the restriction known as \textbf{gpp} \cite{van2015handbook}. This restriction reduces awareness of a formula to that of the atomic propositions occurring therein and corresponds to the notion of awareness that this paper focuses on, as mentioned in Section 1. Further details appear in Section 4.} 

We discuss more deeply what the information represented by $[\circ^+]$ refers to. The definition clearly indicates that this is stronger than implicit knowledge. However, when an agent is aware of all propositions, both are equivalent to explicit knowledge. To demonstrate the difference between them, we temporarily introduce two simple dynamic operators: a becoming-aware operator and a becoming-unaware operator. These operators have no other functionality besides adding propositions to, or removing them from, an agent's awareness set. Formally, let $\mathcal{Q}\subseteq\mathcal{P}$, and the updated models $M[+\mathcal{Q}]_i$ and $M[-\mathcal{Q}]_i$ are defined as follows: 
  $M[+\mathcal{Q}]_i$ is an epistemic modal with awareness 
  $\langle W, \{\sim_i, \mathscr{A}_i'\}_{i\in\mathcal{G}}, V\rangle$, where, 
  \begin{itemize}
   \item for all $w$, $\mathscr{A}_i'(w) \coloneqq \mathscr{A}_i(w)\cup \mathcal{Q}$;
  \end{itemize}
  $M[-\mathcal{Q}]_i$ is an epistemic modal with awareness 
  $\langle W, \{\sim_i, \mathscr{A}_i'\}_{i\in\mathcal{G}}, V\rangle$, where,
  \begin{itemize}
   \item for all $w$, $\mathscr{A}_i'(w) \coloneqq \mathscr{A}_i(w)\setminus \mathcal{Q}$.
  \end{itemize}
  \noindent 
  The satisfaction relation for each operator is as follows:
  \begin{align*}
	M,w &\vDash [+\mathcal{Q}]_i\varphi
	  \textit{\  iff  \ } M[+\mathcal{Q}]_i,w\vDash \varphi;\\[-2pt]
	M,w &\vDash [-\mathcal{Q}]_i\varphi
    \textit{\  iff  \ } M[-\mathcal{Q}]_i,w\vDash \varphi.
  \end{align*} 
\noindent 

Using these operators, we find that $[\circ^+]\varphi$ implies $[+At(\varphi)]E[-At(\varphi)]\varphi$ ($[\circ^+]\varphi \to [+At(\varphi)]E[-At(\varphi)]\varphi$). That is, the information referred to by $[\circ^+]$ would become explicit knowledge immediately upon becoming aware of that particular proposition. On the other hand, implicit knowledge requires awareness of all propositions. Formally, $I\varphi \leftrightarrow [+\mathcal{P}]E([-\mathcal{P}]\varphi)$. For the implicit knowledge to become explicit, awareness of that proposition alone is insufficient; awareness of all propositions is required. These formulae highlight the distinctions between $I$ and $[\circ^+]$ operators. 

Furthermore, this observation is coherent with the interpretation of implicit knowledge as the ideal knowledge of a logically omniscient agent and corroborates our definition of explicit knowledge. For, the possession of knowledge encompassing all tautologies, along with knowledge derived from the agent's existing knowledge through logical inference, that is, implicit knowledge, is appropriate for an agent capable of referring to, being fully aware of, all propositions at a given time. Such an agent can evaluate the necessary truth of any proposition, including tautologies, and perform logical inference on any given propositions. 

\subsection{Basic Properties}
In this subsection, we show some properties of the operators introduced in the previous subsection, mainly $A, I$, and $E$, including their interactions.

\paragraph*{Awareness} 
The following validity is based on the definition of the awareness operator. 
\begin{itemize}
  \item[$\bullet$] $\vDash A_i\varphi \leftrightarrow 
  \bigwedge_{p\in At(\varphi)} A_i p$.
\end{itemize}
\noindent
The formula means that being aware of a proposition $\varphi$ is equivalent to being aware of the atomic propositions occurring in $\varphi$, capturing the notion of \textit{awareness of} as mentioned in Section 1.

The definitions of $\mathop{\sim_i}$ and $\mathop{\approx_i}$ have the conditions on the equality of awareness sets, from which the following expressions hold:
\begin{itemize}
  \item[$\bullet$] $\vDash I_i A_i \varphi$ and $\vDash I_i \neg A_i \varphi$,
  \item[$\bullet$] $\vDash E_i A_i \varphi$ and $\vDash \neg E_i\neg A_i \varphi$.
\end{itemize}
\noindent
These validities show that awareness is always implicitly and explicitly known, while unawareness does not belong to the agent's explicit knowledge.

\paragraph*{Explicit and Implicit knowledge}
Explicit knowledge in FH logic is defined as the aware implicit knowledge. On the other hand, the following expressions hold in our model:
\begin{itemize}
  \item[$\bullet$] $\vDash E_i\varphi \rightarrow I_i \varphi \wedge A_i\varphi$,
  \item[$\bullet$] $\nvDash I_i \varphi \wedge A_i\varphi\rightarrow E_i\varphi$,
  \item[$\bullet$] For all $w\in W$ in $M$, if $\mathcal{P} = \mathscr{A}_i(w)$, then $M\vDash I_i\varphi \leftrightarrow E_i \varphi$,
\end{itemize}
\noindent
In our logic, explicit knowledge implies implicit knowledge and awareness, but the converse does not hold. This invalidity (the second formula) indicates the case, as shown in Example 4, where an agent knows a proposition implicitly and is aware of that proposition itself, yet it is still not explicitly known. This is one of the apparent differences between FH and our logic, showing that our explicit knowledge is a refinement of that in FH logic. Nevertheless, the third expression holds, as with FH logic. This means that if an agent is aware of all propositions, implicit knowledge is equivalent to explicit knowledge. 

Accordingly, we find the following formula, which is valid in FH logic, to be invalid.
\begin{itemize}
  \item $\nvDash I_i\varphi \wedge I_i(\varphi\to \psi) \to (A_i \psi \to E_i \psi)$.
\end{itemize}
This formula itself means that even when an agent lacks awareness of an inference process, as long as she is aware of its consequence, she classifies that consequence as explicit knowledge. Our logic excludes this formula. In the view introduced in Section 1, the agent does not know the consequence explicitly unless she is aware of all the propositions necessary to infer the consequence.

\paragraph*{Introspection}
Both implicit and explicit knowledge in our logic are interpreted with equivalence relations; hence, for $I$ and $E$, the axioms $\mathrm{T}$ and $\mathrm{4}$ hold. 
\begin{itemize}
  \item[$\bullet$] $\vDash I_i \varphi \rightarrow \varphi$ and $\vDash E_i \varphi \rightarrow \varphi$,
  \item[$\bullet$] $\vDash I_i \varphi \to I_i I_i \varphi$ and $\vDash E_i \varphi \to E_i E_i \varphi$ (Positive introspection).
\end{itemize}
\noindent
However, the axiom $\mathrm{5}$ does not hold for the explicit knowledge. The axiom represents negative introspection: an agent knowing what she does not know. This introspection is viewed as unrealistic for human knowledge.
\begin{itemize}
  \item[$\bullet$] $\vDash \neg I_i \varphi \to I_i \neg I_i \varphi$ (Negative introspection),
  \item[$\bullet$] $\nvDash \neg E_i \varphi \to E_i \neg E_i \varphi$,
  \item[$\bullet$] $\vDash \neg E_i\varphi \wedge A_i \varphi \to E_i \neg E_i \varphi$ (Weak negative introspection).
\end{itemize}
\noindent
The third formula, known as weak negative introspection, is valid, stating that an agent explicitly knows what she does not know explicitly only when she is aware of that proposition. Our logic is $\mathcal{S}5$-based because of several benefits (See the footnote in Definition 2); therefore, as seen, the logic has the property of introspection. This shows that while our awareness logic does address one aspect of the concern that $\mathcal{S}5$ is `too strong' for human knowledge by avoiding logical omniscience, which has been our primary focus, it scarcely addresses the aspect concerning introspection \cite{williamson1992inexact}.

From the definitions, $[\circ^+]_{i}\varphi$ is equivalent to the infinite formula $(I_i [\mathop{\approx}]_i)^1 \varphi \wedge \dots$, where $(I_i [\mathop{\approx}]_i)^n$ is iteration of $I_i [\mathop{\approx}]_i$ $n$ times, therefore $E_{i}\varphi$ is equivalent to the infinite formula $A_{i}\varphi \wedge (I_i [\mathop{\approx}]_i)^1 \varphi \wedge \dots$. Although these formulae are not well-formed, the following expressions follow:
\begin{itemize}
  \item[$\bullet$] $\nvDash I_i \varphi \to E_i I_i \varphi$ (Mix4),
  \item[$\bullet$] $\nvDash \neg I_i \varphi \to E_i \neg I_i \varphi$ (Mix5),
  \item[$\bullet$] $\nvDash I_i \varphi \wedge A_i \varphi \to E_i I_i\varphi$ (Mix4A),
  \item[$\bullet$] $\vDash \neg I_i \varphi \wedge A_i \varphi \to E_i \neg I_i \varphi$ (Mix5A).
\end{itemize}
The first and second formulae express that explicit knowledge of (or about) implicit knowledge/ignorance cannot be derived from the implicit information alone. The third formula, which prefixes awareness to the antecedent of the axiom $\mathrm{4}$, is invalid and shows that implicit knowledge and awareness are insufficient for obtaining explicit knowledge again. On the other hand, the fourth formula, which prefixes awareness to the antecedent of $\mathrm{5}$, is valid.

\paragraph*{Logical omniscience}
As with other logics of awareness, our explicit knowledge avoids characterizations of logical omniscience.
\begin{itemize}
  \item[$\bullet$] $\vDash \varphi$ does not imply $\vDash E_i \varphi$ (Knowledge of valid formulae),
  \item[$\bullet$] $M,w\vDash E_i\varphi$ and $\vDash \varphi\to\psi$ do not imply $M,w\vDash E_i\psi$ (Closure under logical implication).
  \item[$\bullet$] $M,w\vDash E_i\varphi$ and $\vDash \varphi\leftrightarrow\psi$ do not imply $M,w\vDash E_i\psi$ (Closure under logical equivalence).
\end{itemize}
\noindent
Furthermore, the expressions with awareness prefixed to the antecedent hold.
\begin{itemize}
  \item[$\bullet$] $\vDash \varphi \text{ and } M,w\vDash A_i\varphi \text{ imply } M,w\vDash E_i \varphi$,
  \item[$\bullet$] $M,w\vDash E_i\varphi \wedge A_i\psi$ and $\vDash \varphi\to\psi$ imply $M,w\vDash E_i\psi$
  \item[$\bullet$] $M,w\vDash E_i\varphi \wedge A_i\psi$ and $\vDash \varphi\leftrightarrow\psi$ imply $M,w\vDash E_i\psi$
\end{itemize}

\paragraph*{Implication relations}
Some operators imply other operators. The following expressions show some implication relations among modalities.
\begin{itemize}
  \item[$\bullet$] $\vDash E_i\varphi \leftrightarrow A_i\varphi \wedge [\circ^+]_i\varphi$,
  \item[$\bullet$] $\vDash [\circ^+]_i \varphi \to [\mathop{\approx}]_iI_i\varphi \wedge  I_i \varphi \wedge [\mathop{\approx}]_i \varphi$,
  \item[$\bullet$] $\vDash I_i \varphi \wedge A_i\varphi \to [\mathop{\approx}]_i \varphi$,
  \item[$\bullet$] $\nvDash [\mathop{\approx}]_i \varphi \to I_i \varphi$ and $\nvDash [\mathop{\approx}]_i \varphi \to A_i\varphi$.
\end{itemize}
The $[\circ^+]$ operator is implied by the $E$ operator and implies not only the $I$ and $[\mathop{\approx}]$ operators but also $[\mathop{\approx}] I$. Thus, it is a necessary condition for $\varphi$ to be included in explicit knowledge that $\varphi$ remains in the knowledge of $i$ as a logically omniscient agent ($I_i \varphi$), even in the simplified world in which propositions of which $i$ is unaware are disregarded. The $[\mathop{\approx}]$ operator is implied by the $I$ and $A$ operators, but the converse is not. Arranging the operators in the order, we have 
\begin{itemize}
  \item $E = [\circ^+]+A > [\mathop{\approx}]I+A > I+A > [\mathop{\approx}]$, 
\end{itemize}
where $X = Y$ abbreviates $X\varphi\leftrightarrow Y\varphi$ and $X > Y$ $X\varphi\to Y\varphi$ for all $\varphi$.

\subsection{A Representation of the Motivating Example} 
We provide a formal representation of the agents' knowledge in Example 4 with our semantics.

We recall Example 4 in Section 2. The last inference in Figure 2 involves $p_2, p_3, f_3$, and $p_4$, each of which reads:
\begin{itemize}
  \item $p_2$: ``The lengths of side $DA$ and $DC$ are equal;''
  \item $p_3$: ``The sum of the measures of angle $DAC$ and $DCA$ is $60$ degrees;''
  \item $f_3$: ``The measures of angles opposite the equal sides are equal;''
  \item $p_4$: ``The measure of angle $x$ is $30$ degrees.''
\end{itemize}

Reflecting their awareness and knowledge, the IK-accessibility relations should be:
\begin{itemize}
 \item $\mathop{\sim_a} = \mathop{\sim_b} = \{(w,w),(v,v),(u,u),(v,u),(u,v)\}$.
\end{itemize} 
\noindent
As for awareness, since $b$ is unaware of $p_2$ at all worlds, 
\begin{itemize}
 \item $\mathscr{A}_a(w) = \mathscr{A}_a(v) = \mathscr{A}_a(u) = \{p_2, p_3, f_3, p_4\}$;
 \item $\mathscr{A}_b(w) = \mathscr{A}_b(v) = \mathscr{A}_b(u) = \{p_2, f_3, p_4\}$;
\end{itemize} 
\noindent
From the definition, the A-equivalence relations should be: 
\begin{itemize}
  \item $\mathop{\approx_a} = \{(w,w),(v,v),(u,u)\}$;
  \item $\mathop{\approx_b} = \{(w,w),(v,v),(u,u),(w,v),(v,w)\}$;
\end{itemize} 

\begin{figure}
  \centering
  \includegraphics[width=9cm]{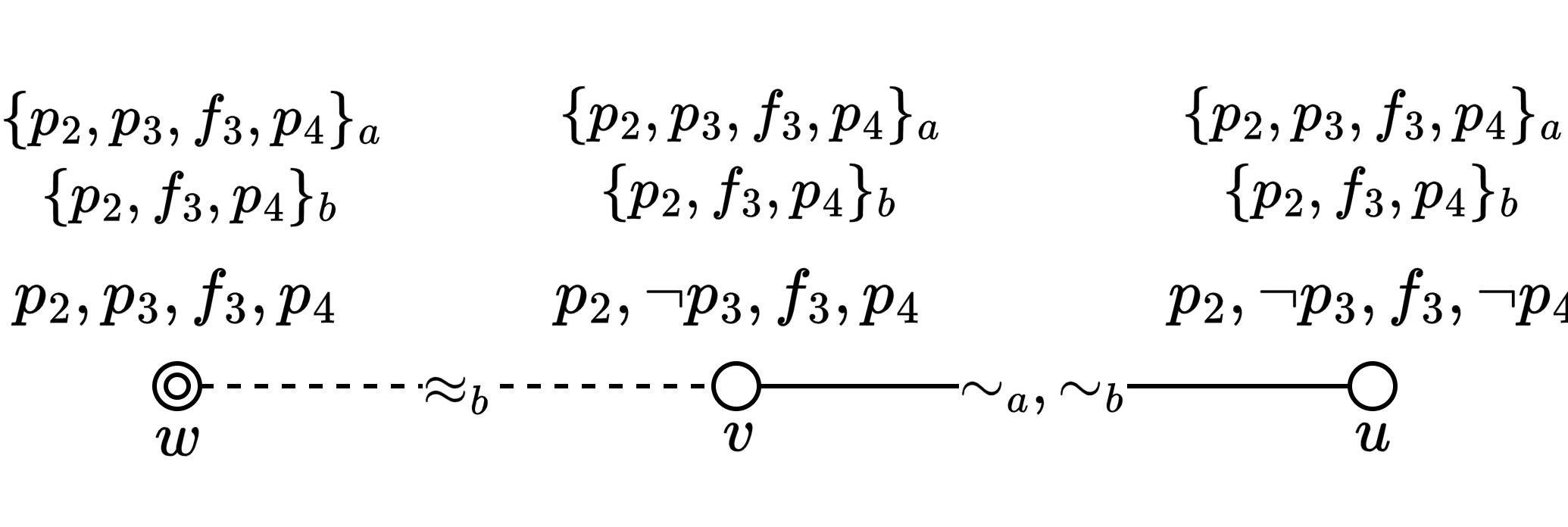}
  \caption{An epistemic model with awareness for Example 4. For readability, we omit the reflexivity.}
\end{figure}
\noindent

These are illustrated in Figure 7. In the actual world $w$, implicit knowledge, awareness, and explicit knowledge of each agent ((1),$\dots$,(7) and $\neg E_b p_4$ instead of (8)) are correctly represented, that is, for $b$, we have $M,w\vDash \neg A_b p_3 \wedge A_b (p_2 \wedge f_3 \wedge p_4) \wedge I_b (p_2 \wedge p_3 \wedge f_3) \wedge I_b(p_2 \wedge p_3 \wedge f_3 \to p_4)\wedge I_b p_4 \wedge \neg E_b p_4$. This is consistent with the intuition about $b$'s explicit knowledge, which is that agent $b$, who is unaware of the required proposition to answer the question, does not explicitly know the answer. As for $a$, we have $M,w\vDash E_a p_4$, and the consistency with her knowledge is maintained.

\begin{table}
  \centering
 \caption{Implicit knowledge, explicit knowledge, and awareness of each agent in Example 4. Note that this table is not exhaustive; for instance, a logical implication involved in ``$\dag$'' and those such as $k_1 \wedge k_2$ are not included.}
 \resizebox{\textwidth}{!}{
  \begin{tabular}{l|l|l}
    & Agent $a$ & Agent $b$ \\\hline\hline
    Implicit knowledge & $f_i$,$k_i$,$p_i$,$p_2\wedge p_3 \wedge f_3 \to p_4$ & $f_i$,$k_i$,$p_i$,$p_2 \wedge p_3 \wedge f_3 \to p_4$\\\hline
    Awareness & $f_i$,$k_i$,$p_i$,$p_2 \wedge p_3 \wedge f_3 \to p_4$ & $f_1$,$f_3$,$k_i$,$p_1$,$p_2$,$p_4$\\\hline
    EK-accessibility ($[\circ^+]$) & $f_i$,$k_i$,$p_i$,$p_2\wedge p_3\wedge f_3 \to p_4$ & $f_1$,$f_3$,$k_i$,$p_1$,$p_2$,$p_2\wedge p_3\wedge f_3 \to p_4$\\\hline
    Explicit knowledge in FH logic& $f_i$,$k_i$,$p_i$,$p_2\wedge p_3\wedge f_3 \to p_4$ & $f_1$,$f_3$,$k_i$,$p_1$,$p_2$,$p_4$\\\hline
    Our explicit knowledge & $f_i$,$k_i$,$p_i$,$p_2\wedge p_3\wedge f_3 \to p_4$ & $f_1$,$f_3$,$k_i$,$p_1$,$p_2$\\\hline
  \end{tabular}
 }
\end{table}

\section{A Comparison with FH logic: Expressivity and Embedding}
This section compares our logic with an FH logic in terms of expressivity and then proves that the FH logic is embeddable into our logic.

FH logic has all the formulae of $\mathcal{L}_{AIL}(\mathcal{P,G})$ excluding $[\mathop{\approx}]$ and $[\circ^+]$ operators. Formally, the language $\mathcal{L}_{FH}(\mathcal{P,G})$ is the set of formulae generated by the following grammar: 
  \begin{align*}
    &\mathcal{L}_{FH}(\mathcal{P,G}) \ni\varphi
    ::=  p \mid \neg\varphi \mid \varphi\wedge\varphi\mid A_i \varphi \mid I_i\varphi \mid E_i \varphi,
  \end{align*}
where $p \in \mathcal{P}$ and $i\in\mathcal{G}$. Other logical connectives $\vee$, $\to$, and $\leftrightarrow$ are defined in the usual manner. When it is unambiguous from the context, we omit $\mathcal{P,G}$ and write $\mathcal{L}_{AIL}$ instead of $\mathcal{L}_{AIL}(\mathcal{P,G})$ and $\mathcal{L}_{FH}$ $\mathcal{L}_{FH}(\mathcal{P,G})$. To avoid confusion, superscripts indicating the languages are attached to operators in formulae, such as $E^{FH}_i$ and $E^{AIL}_i$. 

FH logic is usually interpreted by awareness structures. Here, we focus on the FH logic interpreted by the structure with restrictions regarding awareness and knowledge, \textbf{gpp} and \textbf{ka}, which are focused on in most applications \cite{van2015handbook,belardinelli2024implicit}. Under the restrictions \textbf{gpp} and \textbf{ka}, the resulting awareness structure is structurally identical to an epistemic model with awareness. Therefore, the semantics can also be given by an epistemic model with awareness, as with $\mathcal{AIL}$. The satisfaction relation $\vDash_{FH}$ is defined in the same way as that in $\mathcal{AIL}$, except for the $E$ operator. 
\begin{align*}
  M,w \vDash_{FH} E_{i} \varphi &\textit{\  iff  \ } M,w \vDash_{FH} A_{i}\varphi  \text{ and }M,w\vDash_{FH} I_{i}\varphi,     
\end{align*}

First, we prove that our logical language is at least as expressive as that of the FH logic.
\begin{df}
  Two formulae $\varphi$ and $\psi$ are equivalent, if for every model and world, $M,w\vDash\varphi$ \textit{iff} $M,w\vDash\psi$, which is denoted by $\varphi\equiv \psi$. A language $\mathcal{L}_2$ is at least as expressive as $\mathcal{L}_1$, if for every $\varphi_1\in\mathcal{L}_1$, there is $\varphi_2\in\mathcal{L}_2$ such that $\varphi_1\equiv \varphi_2$, which is denoted by $\mathcal{L}_1 \preceq \mathcal{L}_2$. $\mathcal{L}_1$ is more expressive than $\mathcal{L}_2$, if $\mathcal{L}_1 \preceq \mathcal{L}_2$ and $\mathcal{L}_2 \not\preceq \mathcal{L}_1$, which is denoted by $\mathcal{L}_1\prec\mathcal{L}_2$.
\end{df}

\begin{lemma}
  $\mathcal{L}_{FH}\preceq\mathcal{L}_{AIL}$.
\end{lemma}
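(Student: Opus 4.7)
The plan is to exhibit a translation $t:\mathcal{L}_{FH}\to\mathcal{L}_{AIL}$ and prove by structural induction that $\varphi$ and $t(\varphi)$ agree on every pointed epistemic model with awareness. Since both logics are interpreted on exactly the same class of structures (epistemic models with awareness, thanks to the restrictions \textbf{gpp} and \textbf{ka}), we only need to worry about operators whose semantic clauses differ between the two satisfaction relations. Inspecting the two definitions, the atomic, Boolean, $A_i$, and $I_i$ clauses are literally the same; the only discrepancy is the $E_i$ operator, which in FH logic unfolds to $A_i\wedge I_i$ rather than to $A_i\wedge[\circ^+]_i$.

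Accordingly, I would define $t$ recursively by
\begin{align*}
t(p)&=p,\\
t(\neg\varphi)&=\neg t(\varphi),\\
t(\varphi\wedge\psi)&=t(\varphi)\wedge t(\psi),\\
t(A_i\varphi)&=A_i\,t(\varphi),\\
t(I_i\varphi)&=I_i\,t(\varphi),\\
t(E^{FH}_i\varphi)&=A_i\,t(\varphi)\wedge I_i\,t(\varphi).
\end{align*}
Clearly $t(\varphi)\in\mathcal{L}_{AIL}$ for every $\varphi\in\mathcal{L}_{FH}$, since $t$ never introduces $[\approx]_i$, $[\circ^+]_i$, or $E^{AIL}_i$.

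Then I would verify, by induction on the construction of $\varphi\in\mathcal{L}_{FH}$, that for every model $M$ and world $w$,
\[
M,w\vDash_{FH}\varphi\quad\text{iff}\quad M,w\vDash_{AIL} t(\varphi).
\]
The atomic and Boolean cases are immediate from the matching clauses of $\vDash_{FH}$ and $\vDash_{AIL}$. The $A_i$ case follows because both satisfaction relations reduce $A_i\chi$ to $At(\chi)\subseteq\mathscr{A}_i(w)$, together with the observation that $At(\chi)=At(t(\chi))$ (a small side induction, since $t$ preserves the set of atomic propositions occurring in a formula). The $I_i$ case uses the induction hypothesis applied at every $\sim_i$-successor of $w$. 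Finally, the $E^{FH}_i$ case is handled by unfolding: $M,w\vDash_{FH}E^{FH}_i\varphi$ iff $M,w\vDash_{FH}A_i\varphi$ and $M,w\vDash_{FH}I_i\varphi$, which by the induction hypothesis is equivalent to $M,w\vDash_{AIL}A_i\,t(\varphi)\wedge I_i\,t(\varphi)$, i.e.\ $M,w\vDash_{AIL}t(E^{FH}_i\varphi)$.

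There is no real obstacle here; the proof is essentially bookkeeping. The one point that deserves explicit mention is that the atomic-propositions side-induction $At(\varphi)=At(t(\varphi))$ is needed to make the $A_i$ clause go through, since $A_i$ is the only modality whose semantics depends on the syntactic content of its argument. Everything else is a standard compositional translation, so the lemma follows directly once $t$ is displayed and the induction is carried out.
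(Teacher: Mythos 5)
Your proposal is correct and follows essentially the same route as the paper: exhibit the compositional translation with $t(E^{FH}_i\varphi)=A_i\,t(\varphi)\wedge I_i\,t(\varphi)$ and verify equivalence by structural induction, the $E$ case being the only non-trivial one. The extra remark that $At(\varphi)=At(t(\varphi))$ is a useful piece of bookkeeping the paper leaves implicit, but it does not change the argument.
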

\begin{proof}
  It is sufficient to show a translation $t: \mathcal{L}_{FH}\to\mathcal{L}_{AIL}$ such that $t(\varphi)$ is equivalent to $\varphi$ for every $\varphi\in\mathcal{L}_{FH}$. This is proved by induction on the structure of formulae. The only non-trivial case is the $E$ operator. The translation for this operator is provided by $t(E^{FH}_i\varphi) = A^{AIL}_i t(\varphi) \wedge I^{AIL}_i t(\varphi)$. 
\end{proof}
\noindent
Thus, for every formula of the FH logic, there exists a formula of $\mathcal{AIL}$ that has the same meaning as that formula.

To prove having more expressive power, we first introduce bisimulation \cite{van2007dynamic} between our models and show that this relation implies modal equivalence with respect to the FH logic. This bisimulation is a special case of \textit{standard bisimulation} proposed by \cite{van2018implicit} and suffices for our purpose.
\begin{df}
  A bisimulation between epistemic models with awareness $M = \langle W, \allowbreak \{\sim_i, A_i\}_{i\in \mathcal{G}}, V \rangle$ and $M' = \langle W', \{\sim_i', A_i'\}_{i\in\mathcal{G}}, V'\rangle$ is a relation $B$ on $W \times W'$ satisfying that, for every $p\in\mathcal{P}$ and $i\in\mathcal{G}$, if $(w,w') \in B$, then: 
  \begin{itemize}
    \item $w \in V(p)$ iff $w'\in V'(p)$;
    \item if $(w,v)\in\sim_i$, then there exists $v'$ such that $(w',v')\in\sim_i'$ and $(v,v')\in B$;
    \item if $(w',v')\in\sim_i'$, then there exists $v$ such that $(w,v)\in\sim_i$ and $(v,v')\in B$;
    \item $\mathscr{A}_i(w) = \mathscr{A}'_i(w')$.
  \end{itemize}
\end{df}
\noindent
We write $(M,w)\rightleftharpoons(M',w')$, if and only if there exists a bisimulation $B$ between $M$ and $M'$ such that $(w,w')\in B$.

\begin{lemma}
  For all every $\varphi \in \mathcal{L}_{FH}$, if $(M,w)\rightleftharpoons(M',w')$, then $(M,w)\vDash_{FH}\varphi$ iff $(M',w')\vDash_{FH}\varphi$.
\end{lemma}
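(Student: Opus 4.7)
The plan is to prove this by straightforward induction on the structure of $\varphi \in \mathcal{L}_{FH}$, using the four clauses of the bisimulation $B$ between $M$ and $M'$ with $(w,w') \in B$. Let me sketch how each case would unfold.

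For the base case, when $\varphi = p$ is atomic, the first clause of Definition~8 gives $w \in V(p)$ iff $w' \in V'(p)$, which is exactly $M,w \vDash_{FH} p$ iff $M',w' \vDash_{FH} p$. The Boolean cases $\neg\psi$ and $\psi_1 \wedge \psi_2$ are routine propagations of the inductive hypothesis through the satisfaction clauses.

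The interesting cases are the modal ones. For $\varphi = A_i\psi$, recall that satisfaction depends only on the atomic propositions occurring in $\psi$: $M,w \vDash_{FH} A_i\psi$ iff $At(\psi) \subseteq \mathscr{A}_i(w)$. Since the fourth bisimulation clause gives $\mathscr{A}_i(w) = \mathscr{A}'_i(w')$, the equivalence follows immediately without even invoking the inductive hypothesis. For $\varphi = I_i\psi$, the argument is the standard back-and-forth one: assuming $M,w \vDash_{FH} I_i\psi$, take any $v'$ with $(w',v') \in \sim_i'$; the back clause yields $v$ with $(w,v) \in \sim_i$ and $(v,v') \in B$, whence $M,v \vDash_{FH} \psi$ by the assumption and $M',v' \vDash_{FH} \psi$ by the inductive hypothesis. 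The converse direction uses the forth clause symmetrically. Finally, for $\varphi = E_i\psi$, the FH satisfaction clause reduces this to $A_i\psi \wedge I_i\psi$, so the case follows by combining the preceding two.

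I do not expect any real obstacle here, since the bisimulation in Definition~8 is explicitly tailored to the FH language by including the awareness-equality clause and the usual back-and-forth for $\sim_i$. The only point requiring a small remark is that the $A_i$ case does not need the inductive hypothesis on $\psi$ at all, because $A_i$ is interpreted syntactically via $At(\psi)$ rather than recursively on $\psi$; this is worth flagging in the write-up so that the reader sees why a bisimulation clause on the awareness functions (rather than on aware formulae) suffices for $\mathcal{L}_{FH}$.
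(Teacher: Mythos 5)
Your proposal is correct and follows essentially the same route as the paper's own proof: induction on formula structure, the back-and-forth clauses for $I_i$, the awareness-equality clause for $A_i$, and decomposition of $E_i\psi$ into $A_i\psi \wedge I_i\psi$. Your added remark that the $A_i$ case bypasses the inductive hypothesis because awareness is interpreted syntactically via $At(\psi)$ is a worthwhile clarification but does not constitute a different argument.
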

\begin{proof}
  We prove it by induction on the structure of formulae. As for the base case and that of logical connectives, these proofs are straightforward from the first condition of the bisimulation. 
  \begin{itemize}
    \item For the case of $I_i\varphi$, suppose that $(M,w)\vDash_{FH}I_i\varphi$. For any $v'$ such that $(w',v')\in\sim'_i$, from the third condition, there exists $v$ such that $(w,v)\in\sim_i$ and $(v,v')\in B$. Given that for all $v$ such that $(w,v)\in \sim_i$, $M,v\vDash_{FH}\varphi$, it follows from the induction hypothesis that $M',v'\vDash_{FH} \varphi$. Since $v'$ is an arbitrary world such that $(w',v')\in\sim'_i$, thus, $M',w'\vDash_{FH} I_i\varphi$. The converse direction can be proved similarly by the second condition.
    \item For the case of $A_i\varphi$, suppose that $(M,w)\vDash_{FH}A_i\varphi$, then $At(\varphi)\subseteq \mathscr{A}_i(w)$. From the fourth condition, $\mathscr{A}_i(w) = \mathscr{A}'_i(w')$. Thus, $At(\varphi)\subseteq \mathscr{A}_i(w')$ and $(M',w')\vDash_{FH}A_i\varphi$.
    \item For the case of $E_i\varphi$, It can be proved easily by decomposing into $A_i\varphi$ and $I_i\varphi$.
  \end{itemize}
  \noindent
\end{proof}

\begin{lemma}
  $\mathcal{L}_{AIL}\not\preceq\mathcal{L}_{FH}$.
\end{lemma}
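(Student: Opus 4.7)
The plan is to prove non-embeddability via the contrapositive of Lemma 3: exhibit two pointed epistemic models with awareness, $(M,w)$ and $(M',w')$, such that $(M,w)\rightleftharpoons(M',w')$, and then exhibit a formula of $\mathcal{L}_{AIL}$ that is true at one but false at the other. Since FH-bisimulation preserves satisfaction of every $\mathcal{L}_{FH}$ formula by Lemma 3, no formula of $\mathcal{L}_{FH}$ can match the $\mathcal{L}_{AIL}$ witness on this pair of pointed models, and therefore no formula of $\mathcal{L}_{FH}$ can be equivalent to the witness. The witness will use one of the operators specific to $\mathcal{L}_{AIL}$, namely $[\mathop{\approx}]_i$ (or, alternatively, $[\circ^+]_i$), since these are precisely the constructs that can see differences invisible to bisimulation.

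For the construction I would take $\mathcal{P}=\{p\}$ and a single agent $i$. Model $M$ has two worlds $w,v$ with $\mathop{\sim_i}$ the identity relation, empty awareness sets $\mathscr{A}_i(w)=\mathscr{A}_i(v)=\emptyset$, and $V(p)=\{w\}$. Model $M'$ has one world $w'$ with $\mathop{\sim_i'}=\{(w',w')\}$, $\mathscr{A}'_i(w')=\emptyset$, and $V'(p)=\{w'\}$. Both satisfy the model conditions, including \textbf{ka}, vacuously. The relation $B=\{(w,w')\}$ is routinely checked to be a bisimulation in the sense of Definition 8: atomic harmony holds since both points satisfy $p$; the forth-clause at $\sim_i$ only needs to match $(w,w)$ with $(w',w')$, and similarly for the back-clause; and the awareness sets coincide. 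Hence $(M,w)\rightleftharpoons(M',w')$.

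The distinguishing formula I would use is $[\mathop{\approx}]_i p$. In $M$, since $\mathscr{A}_i(w)=\emptyset$, Definition 3 is vacuously satisfied for the pair $(w,v)$, so $w\approx_i v$; because $v\notin V(p)$, we get $M,w\not\vDash [\mathop{\approx}]_i p$. In $M'$, the $\approx_i$-class of $w'$ is $\{w'\}$, and since $w'\in V'(p)$, we obtain $M',w'\vDash [\mathop{\approx}]_i p$. If some $\psi\in\mathcal{L}_{FH}$ satisfied $\psi\equiv[\mathop{\approx}]_i p$, then by Lemma 3 it would take the same truth value at $(M,w)$ and $(M',w')$, contradicting this disagreement. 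The main obstacle is just choosing the smallest model exposing the gap without accidentally breaking \textbf{ka} or the back-and-forth conditions; the minimalist example above is crafted to make those checks trivial, so that the separation truly rests on the point that $\approx_i$ can relate worlds that $\sim_i$ keeps apart, and that the $[\mathop{\approx}]_i$ (equivalently $[\circ^+]_i$) modality can detect this link.
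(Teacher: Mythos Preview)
Your proof is correct and follows essentially the same strategy as the paper: exhibit two bisimilar pointed epistemic models with awareness and an $\mathcal{L}_{AIL}$ formula that distinguishes them, then invoke Lemma~3. The paper uses slightly different models (three worlds in $M$, one in $M'$) and takes $E_i p$ under the $\mathcal{AIL}$ semantics as the separating formula, whereas you use a two-world $M$ with empty awareness and separate via $[\mathop{\approx}]_i p$; both choices work for the same underlying reason, namely that $\approx_i$ can connect worlds that $\sim_i$ keeps apart.
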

\begin{proof}
  Consider $M$ and $M'$ in Figure 8. Model $M$ consists of three worlds and $M'$ of a single world. Since $(M,w)\leftrightharpoons(M',w')$, for every $\varphi\in \mathcal{L}_{FH}$, $M,w\vDash_{FH}\varphi$ iff $M',w'\vDash_{FH}\varphi$. On the other hand, $M,w\vDash_{AIL} E_i p$ and $M',w'\vDash_{AIL} \neg E_i p$. Therefore, no formula equivalent to $E_i p$ exists.
\end{proof}

\begin{figure}
  \centering
  \includegraphics[width=13cm]{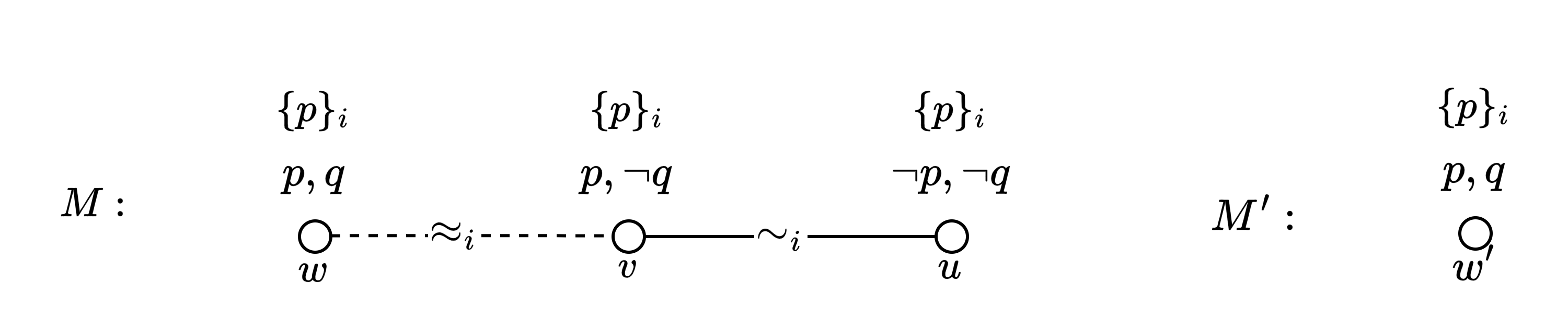}
  \caption{Two epistemic models with awareness for a single agent $i$. Pointed model $(M,w)$ is bisimilar to $(M',w')$. For readability, we omit the reflexivity.}
\end{figure}

\begin{thm}
  $\mathcal{L}_{FH}\prec\mathcal{L}_{AIL}$.
\end{thm}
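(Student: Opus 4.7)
The plan is to assemble the theorem as an immediate corollary of the two preceding lemmas together with Definition 8. Unpacking the definition of $\prec$, proving $\mathcal{L}_{FH}\prec\mathcal{L}_{AIL}$ amounts to establishing the conjunction of $\mathcal{L}_{FH}\preceq\mathcal{L}_{AIL}$ and $\mathcal{L}_{AIL}\not\preceq\mathcal{L}_{FH}$, and each conjunct has already been handled.

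First, I would invoke Lemma 2 for the $\preceq$ direction. Its proof gave an explicit translation $t:\mathcal{L}_{FH}\to\mathcal{L}_{AIL}$ defined by induction on formula structure, acting as the identity on atoms, Booleans, and the $A_i$ and $I_i$ operators, with the only nontrivial clause being $t(E^{FH}_i\varphi) = A^{AIL}_i\, t(\varphi)\wedge I^{AIL}_i\, t(\varphi)$. Since Lemma 2 has already certified that $\varphi\equiv t(\varphi)$ for every $\varphi\in\mathcal{L}_{FH}$, this direction of the theorem is already in hand.

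For the strictness direction, I would invoke Lemma 4, which exhibits the bisimilar pair $(M,w)$ and $(M',w')$ from Figure 8. These two pointed models are bisimilar, so by Lemma 3 they satisfy exactly the same $\mathcal{L}_{FH}$-formulae; yet they are separated by the $\mathcal{L}_{AIL}$-formula $E_i p$, which holds at $(M',w')$ but not at $(M,w)$ (or vice versa, as in the proof of Lemma 4). Consequently no $\mathcal{L}_{FH}$-formula can be equivalent to $E_i p$, giving $\mathcal{L}_{AIL}\not\preceq\mathcal{L}_{FH}$.

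No real obstacle remains: the entire substantive content of the theorem has been absorbed into Lemmas 2, 3, and 4, and the final proof is essentially a one-line assembly. The only thing to be careful about is matching the polarity of the symbol $\prec$ against Definition 8, so that the two conjuncts being cited are exactly the ones whose proofs were just completed.
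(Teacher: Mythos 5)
Your proposal is correct and matches the paper's own proof, which likewise derives the theorem immediately from Lemma 2 ($\mathcal{L}_{FH}\preceq\mathcal{L}_{AIL}$) and Lemma 4 ($\mathcal{L}_{AIL}\not\preceq\mathcal{L}_{FH}$) via the definition of $\prec$. Your additional care about the polarity of $\prec$ in Definition 8 is sensible but does not change the argument.
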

\begin{proof}
  By Lemma 2 and 4, trivial.
\end{proof}

Furthermore, we prove that the FH logic is embeddable into $\mathcal{AIL}$ by utilizing the translation constructed in Lemma 2.
\begin{thm}
  For every $\varphi\in\mathcal{L}_{FH}$, 
  $\vDash_{FH} \varphi$ iff $\vDash_{\mathcal{AIL}} \varphi$.
\end{thm}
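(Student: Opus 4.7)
The plan is to leverage the translation $t \colon \mathcal{L}_{FH} \to \mathcal{L}_{AIL}$ constructed in Lemma 2, which recursively replaces each subformula $E^{FH}_i \psi$ by $A^{AIL}_i t(\psi) \wedge I^{AIL}_i t(\psi)$ and leaves all other connectives unchanged. Under the natural convention — justified by Lemma 2, which shows $\mathcal{L}_{FH} \preceq \mathcal{L}_{AIL}$ — that $\vDash_{\mathcal{AIL}} \varphi$ abbreviates $\vDash_{\mathcal{AIL}} t(\varphi)$ for $\varphi \in \mathcal{L}_{FH}$, the theorem reduces to showing that $t$ preserves and reflects truth at every pointed model, not merely at the global validity level.

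The central technical step is the pointwise equivalence
\[
  M,w \vDash_{FH} \varphi \iff M,w \vDash_{\mathcal{AIL}} t(\varphi),
\]
to be proved by induction on the structure of $\varphi \in \mathcal{L}_{FH}$, quantifying over every epistemic model with awareness $M$ and every world $w$. The key observation is that $\vDash_{FH}$ and $\vDash_{\mathcal{AIL}}$ are defined on the same class of models and their clauses coincide verbatim on atoms, the Boolean connectives, $A_i$, and $I_i$; each of these cases therefore reduces directly to the induction hypothesis applied to the subformulae. The only nontrivial case is $\varphi = E^{FH}_i \psi$: by the FH clause, $M,w \vDash_{FH} E^{FH}_i \psi$ iff $M,w \vDash_{FH} A_i \psi \wedge I_i \psi$, while by construction $t(E^{FH}_i \psi) = A^{AIL}_i t(\psi) \wedge I^{AIL}_i t(\psi)$, so applying the induction hypothesis to $\psi$ together with the coincidence of the $A_i$ and $I_i$ clauses across the two semantics yields the desired equivalence.

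Once the pointwise equivalence is established, the theorem drops out by universal quantification: $\vDash_{FH} \varphi$ iff $M,w \vDash_{FH} \varphi$ for every pointed model $(M,w)$, which is the case iff $M,w \vDash_{\mathcal{AIL}} t(\varphi)$ for every such $(M,w)$, i.e., iff $\vDash_{\mathcal{AIL}} \varphi$. I do not anticipate any real obstacle here; since the underlying model class is shared between the two logics and the translation is designed precisely to bridge the sole operator whose semantic clauses differ, the argument is essentially inductive bookkeeping — the only step meriting care is the $E$-case, which is exactly what makes the translation $t$ from Lemma 2 work.
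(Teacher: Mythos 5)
Your proposal is correct and follows essentially the same route as the paper: Theorem 2's proof likewise invokes the translation $t$ from Lemma 2 and the pointwise equivalence $M,w\vDash_{FH}\varphi$ iff $M,w\vDash_{\mathcal{AIL}}t(\varphi)$, established by induction with $E_i$ as the only nontrivial case. Your explicit remark that the statement must be read via $t$ (since a literal reading would fail on, e.g., $I_ip\wedge A_ip\to E_ip$) is a useful clarification that the paper leaves implicit.
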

\begin{proof}
  We construct a translation $t: \mathcal{L}_{FH}\to\mathcal{L}_{AIL}$ such that for all models and worlds, $M,w\vDash_{FH}\varphi$ \textit{iff} $M,w\vDash_{\mathcal{AIL}} t(\varphi)$ as well as Lemma 2. This completes the proof. 
\end{proof}

\section{Hilbert System $\mathbf{AIL}$}
We now move on to the proof theory. The operators for awareness, implicit knowledge, and $[\mathop{\approx}]$ can be axiomatized based on the axiomatic system\footnotemark[7] $\mathbf{S5}$ \cite{chellas1980modal} and that of FH logic \cite{halpern2001alternative}. Even though all of the operators apart from $A$ are necessity operators on equivalence relations, the $[\circ^+]$ operator has interactions with other operators and makes the compactness of $\mathcal{AIL}$ lost; therefore, its complete axiomatization cannot be a mere $\mathbf{S5}$ with awareness and multimodal operators.

\footnotetext[7]{When written in bold face, $\mathbf{S5}$ refers to the axiomatic system, distinguished from the class $\mathcal{S}5$ of models.}

Table 3 presents a Hilbert system $\mathbf{AIL}$. The seven axioms about awareness $\mathrm{AN}$, $\mathrm{AC}$, $\mathrm{AA}$, $\mathrm{AI}$, $\mathrm{A[\approx]}$, $\mathrm{A[\circ^+]M}$, and $\mathrm{AE}$ reduce awareness for complex formulae into awareness for atomic propositions forming that formula, corresponding to \textbf{gpp}. The axioms $\mathrm{I[\mathop{\approx}]A}$ and $\mathrm{I[\mathop{\approx}]NA}$ correspond to \textbf{ka}. The axiom $AA[\approx]$ reflects the definition of A-equivalence relation and means that an aware true proposition at world $w$ remains in all worlds regarded as the same worlds as $w$ from the agent's viewpoint. For $\mathrm{K_I, T_I, 5_I, K_{[\approx]}, T_{[\approx]}},$ and $\mathrm{5_{[\approx]}}$, as $I$ and $[\approx]$ are necessity operators on equivalence relations, we adopt $\mathrm{K}$,$\mathrm{T}$, and $\mathrm{5}$ in modal logic for these. The axioms for the $[\circ^+]$ operator, $\mathrm{K_{[\circ^+]}}, \mathrm{MIX}$, and $\mathrm{IND}$, are based on axioms of a logic with common knowledge \cite{van2007dynamic} because both operators are interpreted by reflexive-transitive closures. Together, these lead to the validity of $\mathrm{T}$ and $\mathrm{4}$ for the $E$ operator. The axiom $\mathrm{EA[\circ^+]}$ corresponds to the requirement for the $E$ operator on the satisfaction relation.

\begin{df}
  A system $\mathbf{AIL}$ is a set of formulae containing the axioms in Table 3 and is closed under the inference rules in the table. We write $\vdash\varphi$ if $\varphi\in$ $\mathbf{AIL}$. Let $\Gamma$ be a set of formulae in $\mathcal{L}_{AIL}$ and $\bigwedge\Gamma$ be an abbreviation of $\bigwedge_{\varphi\in \Gamma}\varphi$. If there is a finite subset $\Gamma'$ of $\Gamma$ such that $\vdash\bigwedge\Gamma'\to \varphi$, we write $\Gamma\vdash\varphi$ and call it that $\varphi$ is deducible from $\Gamma$.
\end{df}

\begin{table}[h]
  \caption{Axiom schemata and inference rules of $\mathbf{AIL}$}	
  \centering
   \begin{tabular}{|l|l|}
    \hline
    \multicolumn{2}{|c|}{Axiom schemata}\\\hline
    $\mathrm{TAUT}$ & The set of propositional tautologies\\
        $\mathrm{AN}$ & $A_i\varphi \leftrightarrow A_i\neg\varphi$\\
        $\mathrm{AC}$ & $A_i(\varphi\wedge\psi) \leftrightarrow A_i\varphi \wedge A_i\psi$\\
        $\mathrm{AA}$ & $A_i\varphi \leftrightarrow A_i A_j\varphi$\\
        $\mathrm{AI}$ & $A_i \varphi\leftrightarrow A_i I_j \varphi$\\
        $\mathrm{A[\approx]}$ & $A_i\varphi \leftrightarrow A_i [\approx]_j\varphi$\\
        $\mathrm{A[\circ^+]}$ & $A_i\varphi \leftrightarrow A_i [\circ^+]_j\varphi$\\
        $\mathrm{AE}$ & $A_i \varphi \leftrightarrow A_i E_j \varphi$\\
        $\mathrm{IA}$ & $A_i \varphi\to I_i A_i \varphi$\\        
        $\mathrm{INA}$ & $\neg A_i \varphi\to I_i \neg A_i \varphi$\\
        $\mathrm{AA[\approx]}$ & $A_i p\wedge p\to [\approx]_i p$\\
        $\mathrm{K_I}$ & $I_i(\varphi\to \psi)\to (I_i\varphi \to I_i\psi)$\\
        $\mathrm{T_I}$ & $I_i \varphi \to \varphi$\\
        $\mathrm{5_I}$ & $\neg I_i\varphi\to I_i \neg I_i\varphi$\\
        $\mathrm{K_{[\approx]}}$ & $[\approx]_i(\varphi\to \psi)\to ([\approx]_i\varphi \to [\approx]_i\psi)$\\
        $\mathrm{T_{[\approx]}}$ & $[\approx]_i \varphi \to \varphi$\\
        $\mathrm{5_{[\approx]}}$ & $\neg [\approx]_i\varphi\to [\approx]_i \neg [\approx]_i\varphi$\\
        $\mathrm{K_{[\circ^+]}}$ & $[\circ^+]_i(\varphi\to \psi)\to ([\circ^+]_i\varphi \to [\circ^+]_i\psi)$\\
        $\mathrm{MIX}$ & $[\circ^+]_i\varphi \to \varphi\wedge[\approx]_i I_i [\circ^+]_i\varphi$\\
        $\mathrm{IND}$ & $[\circ^+]_i(\varphi\to [\approx]_i I_i\varphi)\to(\varphi \to [\circ^+]_i\varphi)$\\
        $\mathrm{EA[\circ^+]}$ & $E_i\varphi\leftrightarrow A_i\varphi\wedge [\circ^+]_i\varphi$\\\hline
        \multicolumn{2}{|c|}{Inference Rules}\\\hline
        $\mathrm{MP}$ & If $\vdash \varphi$ and $\vdash \varphi\to\psi$, then $\vdash \psi$\\
        $\mathrm{GI}$ & If $\vdash \varphi$ then $\vdash I_i\varphi$\\
        $\mathrm{G[\approx]}$ &If $\vdash \varphi$ then $\vdash [\approx]_i\varphi$\\
        $\mathrm{G[\circ^+]}$ & If $\vdash \varphi$ then $\vdash [\circ^+]_i\varphi$\\\hline
  \end{tabular}
\end{table}

\subsection{Soundness}
We prove that every theorem of $\mathbf{AIL}$ is valid, which is the soundness of the system.  
\begin{thm}
  If $\vdash\varphi$, then $\vDash\varphi$.
  \end{thm}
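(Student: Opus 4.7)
The plan is to proceed by induction on the length of a derivation of $\varphi$ in $\mathbf{AIL}$, showing that every axiom schema in Table 3 is valid and that each inference rule preserves validity. For the inference rules, $\mathrm{MP}$ is handled by the usual argument on truth tables, and $\mathrm{GI}$, $\mathrm{G[\approx]}$, $\mathrm{G[\circ^+]}$ are standard necessitations: if $\varphi$ is true at every pointed model, then in particular it is true at every successor by the respective relation, so the necessitated formula also holds universally.

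For the axioms, I would partition them into groups and handle each group uniformly. First, $\mathrm{TAUT}$ is immediate. The awareness-reduction axioms $\mathrm{AN}$, $\mathrm{AC}$, $\mathrm{AA}$, $\mathrm{AI}$, $\mathrm{A[\approx]}$, $\mathrm{A[\circ^+]}$, and $\mathrm{AE}$ all follow from the semantic clause $M,w\vDash A_i\varphi$ iff $At(\varphi)\subseteq \mathscr{A}_i(w)$, together with the observations $At(\neg\varphi)=At(\varphi)$, $At(\varphi\wedge\psi)=At(\varphi)\cup At(\psi)$, and $At(O\varphi)=At(\varphi)$ for every unary modal operator $O\in\{A_j,I_j,[\approx]_j,[\circ^+]_j,E_j\}$. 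The axioms $\mathrm{IA}$ and $\mathrm{INA}$ follow from the \textbf{ka} restriction that $\mathscr{A}_i$ is constant along $\sim_i$. The axiom $\mathrm{AA[\approx]}$ is immediate from the defining clause of $\approx_i$: if $p\in\mathscr{A}_i(w)$ and $w\in V(p)$, then $v\in V(p)$ for every $v$ with $(w,v)\in\approx_i$. The six axioms $\mathrm{K_I},\mathrm{T_I},\mathrm{5_I}$ and $\mathrm{K_{[\approx]}},\mathrm{T_{[\approx]}},\mathrm{5_{[\approx]}}$ are the standard $\mathbf{S5}$ axioms, sound because $\sim_i$ and $\approx_i$ are equivalence relations. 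Finally, $\mathrm{EA[\circ^+]}$ holds directly by the defining clause of $E_i$.

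The main obstacle will be the axioms governing $[\circ^+]_i$, namely $\mathrm{MIX}$ and $\mathrm{IND}$, since they encode the reflexive-transitive-closure behaviour of $(\sim_i\circ\approx_i)^+$. For $\mathrm{MIX}$, the conjunct $\varphi$ follows from reflexivity of the closure (both $\sim_i$ and $\approx_i$ are reflexive, so $(w,w)\in\sim_i\circ\approx_i$). For the conjunct $[\approx]_i I_i [\circ^+]_i\varphi$, I would fix $(w,v)\in\approx_i$, $(v,u)\in\sim_i$, and $(u,x)\in(\sim_i\circ\approx_i)^+$; the single chain $w\approx_i v\sim_i u$ gives $(w,u)\in\sim_i\circ\approx_i$, which when composed transitively with $(u,x)$ yields $(w,x)\in(\sim_i\circ\approx_i)^+$, whence $\varphi$ at $x$ by hypothesis. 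For $\mathrm{IND}$, assume $[\circ^+]_i(\varphi\to[\approx]_iI_i\varphi)$ and $\varphi$ both hold at $w$; I would show $\varphi$ at $v$ for every $v$ with $(w,v)\in(\sim_i\circ\approx_i)^+$ by induction on the minimal number $n$ of $\sim_i\circ\approx_i$-steps connecting $w$ to $v$. In the base case $n=1$, apply the assumption at $w$ to get $[\approx]_iI_i\varphi$ at $w$ and follow the two-step decomposition of $(w,v)$. For the inductive step, let $v'$ be the predecessor of $v$ in the chain; the induction hypothesis gives $\varphi$ at $v'$, and since $(w,v')\in(\sim_i\circ\approx_i)^+$ we can invoke the assumption at $v'$ to conclude $\varphi$ at $v$ via the last $\approx_i;\sim_i$-step.

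Altogether, once every axiom is shown valid and each rule is seen to preserve validity, the induction on derivation length yields $\vDash\varphi$ whenever $\vdash\varphi$. I expect the closure axioms to take most of the work, while the remaining cases reduce to routine checks against the semantic clauses in Definition~4 and the defining properties of $\approx_i$ and the \textbf{ka} condition.
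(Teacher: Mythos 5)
Your proposal is correct and follows essentially the same route as the paper: validity of each axiom is checked against the semantic clauses (the awareness axioms via properties of $At$, $\mathrm{IA}/\mathrm{INA}$ via \textbf{ka}, the $\mathrm{S5}$ axioms via the equivalence relations, and $\mathrm{MIX}$/$\mathrm{IND}$ via the closure $(\mathop{\sim_i}\circ\mathop{\approx_i})^+$), and the rules are seen to preserve validity. Your treatment of $\mathrm{MIX}$ and $\mathrm{IND}$ is in fact more explicit than the paper's (which compresses the path-length induction for $\mathrm{IND}$ into a short remark), but it is the same argument.
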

  \begin{proof} 
    We prove it by induction on the structure of $\mathbf{AIL}$. First, we prove that all the axioms are valid. For the $I_i$ and $[\approx]_i$ operators, those axioms can be proved in a way similar to \textbf{S5} because both are necessity operators on equivalence relations.
    \begin{itemize}
      \item Since the axioms $\mathrm{AC}$, $\mathrm{AA}$, $\mathrm{AI}$, $\mathrm{A[\approx]}$, $\mathrm{A[\circ^+]}$, and $\mathrm{AE}$ can be proved in a way similar to $\mathrm{AN}$, we show only the proof for $\mathrm{AN}$ here. Suppose that $M,w\vDash A_i\varphi$, then $At(\varphi)\subseteq \mathscr{A}_i(w)$. Since $At(\varphi) = At(\neg\varphi)$, $At(\neg\varphi)\subseteq \mathscr{A}_i(w)$. Thus, $M,w\vDash A_i\neg\varphi$. The converse direction can be proved similarly.
      \item Since the axiom $\mathrm{IA}$ can be proved in a way similar to $\mathrm{INA}$, we show only $\mathrm{INA}$. Suppose that $M,w\vDash \neg A_i \varphi$, then $At(\neg \varphi)\subseteq \mathscr{A}_i(w)$. An awareness set is constant within both equivalence classes defined by $\mathop{\sim_i}$ and $\mathop{\approx_i}$, respectively, from the definitions. For all $v$ such that $(w,v)\in \mathop{\sim_i}$, $At(\neg \varphi)\subseteq \mathscr{A}_i(v)$ and thus $M,w\vDash I_i \neg A_i \varphi$.
      \item For $\mathrm{AA[\approx]}$, suppose that $M,w\vDash A_i p \wedge p$, then $p \in \mathscr{A}_i(w)$ and $w\in V(p)$. For $v$ such that $(w,v)\in\mathop{\approx}_i$, $p\in \mathscr{A}_i(v)$ and $v \in V(p)$ from the definition of the relation. Thus, $M,w\vDash [\approx]_i p$.
      \item For $\mathrm{K_{[\circ^+]}}$, suppose that $M,w\vDash [\circ^+]_i(\varphi\to\psi)$ and $M,w\vDash [\circ^+]_i\varphi$. For all $v$ such that $(w,v)\in (\mathop{\sim_i}\circ\mathop{\approx_i})^+$, $M,v\vDash\varphi\to\psi$ and $M,v\vDash\varphi$, then $M,v\vDash\psi$. Thus, $M,w\vDash [\circ^+]_i\psi$.
      \item For $\mathrm{MIX}$, suppose that $M,w\vDash [\circ^+]_i\varphi$. Since $(\mathop{\sim_i} \circ\mathop{\approx_i})^+$ is an equivalence relation, $M,w\vDash\varphi\wedge[\approx]_i I_i [\circ^+]_i\varphi$. 
      \item For $\mathrm{IND}$, suppose that $M,w\vDash [\circ^+]_i(\varphi\to [\approx]_i I_i\varphi)$ and $M,w\vDash\varphi$. For all $v$ such that $(w,v)\in (\mathop{\sim_i} \circ\mathop{\approx_i})^+$, $M,v\vDash\varphi\to[\approx]_i I_i\varphi$. Since this closure is an equivalence relation, $M,w\vDash[\approx]_i I_i\varphi$. The formula $\varphi$ holds at all worlds from $w$ on $\mathop{\sim_i} \circ\mathop{\approx_i}$, and $[\approx]_i I_i\varphi$ holds at those worlds as well. Therefore, $M,w\vDash [\circ^+]_i\varphi$.
    \end{itemize}
    The remaining task is to prove that if the antecedent is valid, the consequence is also valid for each inference rule. All of them are straightforward.
  \end{proof}  
  
\subsection{Completeness}
We prove the converse direction: every valid formula is a theorem of the system. The logic $\mathcal{AIL}$ is no longer compact because we can take a set of formulae such as $\Phi = \{([\approx]_i I_i)^n \varphi\mid n\in \mathbb{N}\} \cup \{\neg [\circ^+]_i\varphi\}$, where $([\approx]_i I_i)^n$ is $n$ iterations of $[\approx]_i I_i$. Therefore, we restrict elements of maximal consistent sets to finite sets and then define a canonical model. This technique is employed in the proof on a logic with common knowledge, which is defined by the reflexive-transitive closure \cite{van2007dynamic}. We customize the technique for our logic, similar to \cite{yudai2022-1}.

As the first step, we define a \textit{closure} as a restricted set of formulae.
\begin{df}
  Let $cl : \mathcal{L}_{AIL}\to 2^{\mathcal{L}_{AIL}}$ be the function such that, for every $\varphi\in\mathcal{L}_{AIL}$, $cl(\varphi)$ is the smallest set satisfying that: 
  \begin{itemize}
    \item[1.] $\varphi\in cl(\varphi)$; 
    \item[2.] If $\psi\in cl(\varphi)$, then $sub(\psi)\subseteq cl(\varphi)$, where $sub(\psi)$ is the set of subformulae of $\psi$;
    \item[3.] If $\psi\in cl(\varphi)$ and $\psi$ is not a form of negation, then $\neg\psi\in cl(\varphi)$;
    \item[4.] If $A_i \psi \in cl(\varphi)$, then $A_i\chi\in cl(\varphi)$, where $\chi\in sub(\psi)$;
    \item[5.] If $A_i \psi \in cl(\varphi)$, then $I_i A_i \psi, I_i \neg A_i \psi, [\approx]_i p\in cl(\varphi)$, where $p$ is an atomic proposition in $sub(\psi)$;
    \item[6.] If $I_i\psi\in cl(\varphi)$ and $\psi$ is a form of neither $I_i\chi$ nor $\neg I_i\chi$, then $I_i I_i\psi$, $I_i\neg I_i\psi\in cl(\varphi)$;
    \item[7.] If $[\approx]_i\psi\in cl(\varphi)$ and $\psi$ is a form of neither $[\approx]_i\chi$ nor $\neg [\approx]_i\chi$, then $[\approx]_i[\approx]_i\psi$, $[\approx]_i\neg[\approx]_i\psi\in cl(\varphi)$;
    \item[8.] If $[\circ^+]_i\psi\in cl(\varphi)$, then $[\approx]_i I_i [\circ^+]_i\psi\in cl(\varphi)$;
    \item[9.] If $E_i\psi\in cl(\varphi)$, then $A_i \psi$, $[\circ^+]_i \psi\in cl(\varphi)$.
  \end{itemize} 
\end{df}
\noindent

\begin{lemma}
  For every $\varphi$, $cl(\varphi)$ is finite.
\end{lemma}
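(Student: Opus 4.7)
The plan is to show $cl(\varphi)$ is finite by exhibiting an explicit finite set $T \supseteq \{\varphi\}$ that itself satisfies conditions 1--9 of Definition 10. Since $cl(\varphi)$ is by construction the smallest such set, this immediately gives $cl(\varphi) \subseteq T$, and hence $cl(\varphi)$ is finite.

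First I would collect the basic finite ingredients: let $S = sub(\varphi)$, which is finite because $\varphi$ has finitely many symbols; let $P \subseteq \mathcal{P}$ be the finite set of atomic propositions occurring in $\varphi$; and recall that $\mathcal{G}$ is finite by Definition 1. Then I would enumerate, for each $(\chi, i) \in S \times \mathcal{G}$, the bounded list of shapes that rules 4--9 can possibly introduce on top of $\chi$: the single-prefix forms $A_i\chi, I_i\chi, [\approx]_i\chi, [\circ^+]_i\chi, E_i\chi$; the ``second-level'' compositions $I_iA_i\chi, I_i\neg A_i\chi, I_iI_i\chi, I_i\neg I_i\chi, [\approx]_i[\approx]_i\chi, [\approx]_i\neg[\approx]_i\chi, [\approx]_iI_i[\circ^+]_i\chi$; the atomic-triggered formulas $[\approx]_i p$ for $p\in P$; and all of their subformulas and negations. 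Define $T$ to be the union of $S$ with this enumerated list. Each piece is a finite product of finite index sets, so $T$ is finite.

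The real content is verifying that $T$ is closed under all nine conditions. Conditions 1--3 hold by construction. Condition 4 holds because any $A_i\psi\in T$ has $\psi$ whose subformulas are in $S$ or in the enumerated list. Conditions 5, 8, 9 are satisfied because the prefixed formulas they demand have been put into $T$ explicitly. The essential point is conditions 6 and 7: the guard ``$\psi$ is not of the form $I_i\chi$ or $\neg I_i\chi$'' (respectively for $[\approx]_i$) prevents unbounded iteration -- once we add $I_iI_i\psi$ and $I_i\neg I_i\psi$, rule 6 is blocked on them because their immediate subformula is exactly of the excluded shape. Hence the modal towers cap at depth one above their seeds.

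The main obstacle, and the place requiring the most care, is tracking the cascading interaction between rule 2 (subformula closure) and the prefixes added by rules 5--8. For example, rule 8 yields $[\approx]_iI_i[\circ^+]_i\chi$, whose subformula $I_i[\circ^+]_i\chi$ triggers rule 6 to produce $I_iI_i[\circ^+]_i\chi$ and $I_i\neg I_i[\circ^+]_i\chi$; these in turn expose further subformulas via rule 2. The plan is to verify by a finite case analysis that every such cascade terminates within $T$, relying on three observations: (i) the guarded rules 6 and 7 fire at most once per seed formula, (ii) rule 8 applied to $[\circ^+]_i\chi$ already in $T$ returns a formula already in $T$, and (iii) only finitely many agents occur in the enumeration. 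Once this bookkeeping is completed, $T$ is closed under all nine conditions, and the finiteness of $cl(\varphi)$ follows.
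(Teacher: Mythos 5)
Your proposal is correct in substance, but it takes a genuinely different route from the paper. The paper's own proof is a two-line induction ``on the structure of $cl(\varphi)$,'' asserting that since every formula has finitely many subformulae, each condition preserves finiteness; you instead exhibit an explicit finite superset $T$ closed under conditions 1--9 and invoke the minimality of $cl(\varphi)$. Your approach is arguably the more rigorous of the two: ``each step preserves finiteness'' does not by itself rule out an infinite closure (a rule adding $I_i\psi$ for every $\psi$ present would preserve finiteness at each stage yet generate an infinite set), so the real content of the lemma is exactly what you isolate --- the guards in conditions 6 and 7 block iteration after one level, and conditions 5, 8, 9 only bolt a bounded stack of prefixes onto subformulae of $\varphi$, over the finite index set $\mathcal{G}$. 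The one loose end is that the enumeration of $T$ in your second paragraph is not yet closed: condition 8 produces $[\approx]_i I_i[\circ^+]_i\chi$, whose subformula $I_i[\circ^+]_i\chi$ feeds condition 6 to give $I_iI_i[\circ^+]_i\chi$ and $I_i\neg I_i[\circ^+]_i\chi$, and condition 7 applied to $[\approx]_i I_i[\circ^+]_i\chi$ gives further third-level towers; you flag this cascade and correctly observe that the guards cap it, but those formulae (and their negations and subformulae) must actually be added to $T$ before the closure verification goes through. Once that finite bookkeeping is done, the argument is complete, and it buys a proof that makes explicit why the guarded formulation of conditions 6 and 7 is what keeps the closure finite --- a point the paper's proof leaves implicit.
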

\begin{proof}
  We prove it by induction on the structure of $cl(\varphi)$. Since the number of subformulae of each formula is finite, all the conditions preserve the finiteness.
\end{proof}

We proceed to define a consistent set and a maximal consistent set in a closure. A set of the latter is used as a domain of a canonical model.
\begin{df}
  Let $\Phi$ be the closure of a formula. 
  $\Gamma$ is a consistent set in $\Phi$ iff 
  \begin{itemize}
    \item[$\bullet$] $\Gamma \subseteq \Phi$, 
    \item[$\bullet$] $\Gamma \nvdash \bot$.
  \end{itemize}
  Moreover, $\Gamma$ is a maximal consistent set in $\Phi$, iff $\Gamma$ is a consistent set and
  \begin{itemize}
    \item[$\bullet$] There is no $\Gamma'\subseteq\Phi$ such that $\Gamma\subset\Gamma'$ and $\Gamma'\nvdash \bot$.
  \end{itemize}
\end{df}

A consistent set can always expand to a maximal consistent set that includes the original set. 
\begin{lemma}
  Let $\Phi$ be the closure of a formula. If $\Gamma$ is a consistent set in $\Phi$, then there exists a maximal consistent set $\Delta$ in $\Phi$ such that $\Gamma\subseteq\Delta$. 
\end{lemma}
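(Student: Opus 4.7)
The plan is to use the finite version of the standard Lindenbaum construction, which is available here because Lemma 6 guarantees that the closure $\Phi$ is finite. Since we only need to extend $\Gamma$ within the finite set $\Phi$, rather than within the whole language $\mathcal{L}_{AIL}$, no appeal to Zorn's lemma or to an enumeration of all formulae is required; a simple finite induction suffices.

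First I would enumerate $\Phi = \{\psi_1, \ldots, \psi_n\}$, which is possible because $\Phi$ is finite. Then I would define inductively a chain $\Gamma = \Delta_0 \subseteq \Delta_1 \subseteq \cdots \subseteq \Delta_n$ by setting
\[
\Delta_k =
\begin{cases}
\Delta_{k-1} \cup \{\psi_k\} & \text{if } \Delta_{k-1} \cup \{\psi_k\} \nvdash \bot, \\
\Delta_{k-1} & \text{otherwise},
\end{cases}
\]
and take $\Delta \coloneqq \Delta_n$. By construction each $\Delta_k \subseteq \Phi$, so $\Delta \subseteq \Phi$, and a routine induction on $k$ shows $\Delta_k \nvdash \bot$ at every step, hence $\Delta$ is a consistent set in $\Phi$. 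Clearly $\Gamma \subseteq \Delta$.

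For maximality, suppose toward contradiction that there is some $\Delta' \subseteq \Phi$ with $\Delta \subsetneq \Delta'$ and $\Delta' \nvdash \bot$. Pick $\psi_k \in \Delta' \setminus \Delta$. Since deducibility is monotone on premise sets (a finite subset witnessing $\Gamma' \vdash \bot$ also witnesses deduction from any superset), $\Delta_{k-1} \cup \{\psi_k\} \subseteq \Delta'$ and therefore $\Delta_{k-1} \cup \{\psi_k\} \nvdash \bot$. But then by the defining clause of the construction, $\psi_k \in \Delta_k \subseteq \Delta$, contradicting $\psi_k \notin \Delta$. Thus no such proper extension exists, and $\Delta$ is maximal consistent in $\Phi$.

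There is no real obstacle in this proof; the only thing to be careful about is that ``maximal'' here means maximal \emph{within $\Phi$}, not maximal in $\mathcal{L}_{AIL}$, so the argument for maximality only needs to consider candidate extensions by formulae already in $\Phi$. The finiteness of $\Phi$ from Lemma 6 is what makes this clean finite induction work in place of the usual transfinite construction, and the monotonicity of $\vdash$ (which follows directly from Definition 10, since a finite subset $\Gamma' \subseteq \Gamma$ witnessing $\Gamma \vdash \varphi$ remains a finite subset of any superset of $\Gamma$) is the only other ingredient.
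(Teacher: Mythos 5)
Your proof is correct and takes essentially the same route as the paper, which simply invokes the Lindenbaum construction of ``adding formulae in $\Phi$ to $\Gamma$ so as to preserve the consistency''; you have merely spelled out the finite enumeration and the maximality argument that the paper leaves implicit. One small slip: the finiteness of the closure is established by Lemma~5 of the paper, not Lemma~6 (which is the statement being proved).
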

\begin{proof}
  It can be proved in the same way as Lindenbaum's lemma. We obtain a maximal consistent set by adding formulae in $\Phi$ to $\Gamma$ so as to preserve the consistency.
\end{proof}

Next, we define a canonical model restricted by the closure of a formula and an equivalence relation depending on the model.
\begin{df}
  Let $\Phi$ be the closure of a formula. A canonical model $M^*$ for $\Phi$ is a tuple $\langle W^*,\{(\mathop{\sim_i})^*,(\mathscr{A}_i)^*\}_{i\in\mathcal{G}}, V^*\rangle$, 
  where:
    \begin{itemize}
      \item $W^*$ $\coloneqq\{\Gamma \mid \Gamma \text{ is a maximal consistent set in }\Phi\}$;
      \item $(\Gamma,\Delta)\in (\mathop{\sim_i})^* \textit{\  iff  \ }\{\varphi\mid I_i\varphi\in \Gamma\}\subseteq \Delta$;
      \item $V^*(p)\coloneqq \{\Gamma \mid p\in\Gamma\}$;
      \item $(\mathscr{A}_i)^*(\Gamma) \coloneqq \{p\mid A_i p\in \Gamma\}$.
    \end{itemize}
\end{df}

\begin{df}
  Let $\Phi$ be the closure of a formula. A relation $(\mathop{\approx_i})^*$ on $W^*$ is defined by $(\Gamma,\Delta) \in (\mathop{\approx_i})^* \textit{\  iff  \ } \{\varphi\mid [\approx]_i\varphi\in \Gamma\}\subseteq \Delta$.
\end{df}

We now prove the canonicity: a canonical model for a closure is an epistemic model with awareness.
\begin{lemma}
  For every $\varphi$, a canonical model for the closure of $\varphi$ is an epistemic model with awareness. 
\end{lemma}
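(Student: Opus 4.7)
The plan is to verify each clause of Definition 2 for the tuple $M^* = \langle W^*,\{(\mathop{\sim_i})^*,(\mathscr{A}_i)^*\}_{i\in\mathcal{G}}, V^*\rangle$. Non-emptiness of $W^*$ follows from consistency of $\mathbf{AIL}$ together with Lemma 7: the empty set (or $\{\varphi\}$) is a consistent subset of $cl(\varphi)$ and therefore extends to a maximal consistent set in $cl(\varphi)$. The valuation $V^*$ is trivially a function $\mathcal{P}\to 2^{W^*}$ by definition. So the two substantive tasks are (i) showing $(\mathop{\sim_i})^*$ is an equivalence relation on $W^*$ and (ii) showing the \textbf{ka} condition, i.e., $(\Gamma,\Delta)\in(\mathop{\sim_i})^*$ implies $(\mathscr{A}_i)^*(\Gamma) = (\mathscr{A}_i)^*(\Delta)$.

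For (i), I would follow the standard $\mathbf{S5}$ argument adapted to the bounded-closure setting. Reflexivity uses $\mathrm{T_I}$: if $I_i\varphi\in\Gamma$, then $\Gamma\vdash\varphi$ by $\mathrm{T_I}$ and $\mathrm{MP}$, and since $\Gamma$ is maximal consistent in $cl(\varphi)$ with $\varphi\in cl(\varphi)$ (because $\varphi\in sub(I_i\varphi)\subseteq cl(\varphi)$ by clause 2), $\varphi\in\Gamma$. Symmetry uses $\mathrm{5_I}$ together with maximality: suppose $(\Gamma,\Delta)\in(\mathop{\sim_i})^*$ and $I_i\varphi\in\Delta$; I want $\varphi\in\Gamma$. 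The trick is the contrapositive using closure clause 6: if $\varphi\notin\Gamma$ while $I_i\varphi\in cl(\varphi_0)$, then $\neg I_i\varphi$ is consistent to add to $\Gamma$, and by $\mathrm{5_I}$ we get $I_i\neg I_i\varphi\in\Gamma$ (using the fact that $I_i\neg I_i\varphi\in cl(\varphi_0)$ by clause 6), which propagates to $\Delta$ and contradicts $I_i\varphi\in\Delta$ via $\mathrm{T_I}$. Transitivity uses $\mathrm{K_I}$ and positive introspection derivable from $\mathrm{T_I}$ and $\mathrm{5_I}$, again relying on clause 6 to ensure that the needed $I_iI_i\psi$ formulae lie in the closure.

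For (ii), I would use axioms $\mathrm{IA}$ and $\mathrm{INA}$ together with clause 5 of the closure definition. Suppose $(\Gamma,\Delta)\in(\mathop{\sim_i})^*$ and $p\in(\mathscr{A}_i)^*(\Gamma)$, i.e., $A_ip\in\Gamma$. By clause 5, $I_iA_ip\in cl(\varphi)$, and by $\mathrm{IA}$ and $\mathrm{MP}$, $\Gamma\vdash I_iA_ip$. Maximality of $\Gamma$ gives $I_iA_ip\in\Gamma$, hence $A_ip\in\Delta$ by definition of $(\mathop{\sim_i})^*$, so $p\in(\mathscr{A}_i)^*(\Delta)$. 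The reverse inclusion is symmetric using $\mathrm{INA}$ in place of $\mathrm{IA}$: if $A_ip\notin\Gamma$, then by maximality $\neg A_ip$ is consistent to add, and since $I_i\neg A_ip\in cl(\varphi)$ by clause 5, $\mathrm{INA}$ forces $I_i\neg A_ip\in\Gamma$, which propagates $\neg A_ip$ to $\Delta$.

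The main obstacle I anticipate is not conceptual but bookkeeping: because the closure is restricted and maximal consistency is defined only relative to $cl(\varphi)$, each step above must verify that the auxiliary formulae invoked (namely $I_iI_i\psi$, $I_i\neg I_i\psi$, $I_iA_i\psi$, $I_i\neg A_i\psi$) actually lie inside the closure. This is precisely why Definition 13 spells out clauses 5 and 6; the proof should cite these clauses explicitly at each invocation. Once those membership checks are in hand, the arguments reduce to routine deductions in $\mathbf{AIL}$, so no deeper tool is required.
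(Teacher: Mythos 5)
Your proposal is correct and follows essentially the same route as the paper: the standard $\mathbf{S5}$ canonical-model argument for $(\mathop{\sim_i})^*$ using $\mathrm{T_I}$, $\mathrm{5_I}$, $\mathrm{K_I}$ and $\mathrm{GI}$, and the \textbf{ka} condition via $\mathrm{IA}$ and $\mathrm{INA}$. Your explicit bookkeeping of which auxiliary formulae (e.g.\ $I_iI_i\psi$, $I_i\neg A_ip$) lie in the closure via clauses 5 and 6 is a detail the paper leaves implicit, and is welcome; only note that the Lindenbaum-style extension you invoke for non-emptiness of $W^*$ is the paper's Lemma~6, not Lemma~7.
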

\begin{proof}
  We prove that a canonical model for the closure of $\varphi$ satisfies the definition of an epistemic model with awareness. For $W^*$ and $V^*$, the proofs are trivial.
  \begin{itemize}
  \item For $(\mathop{\sim_i})^*$, it is proved in a way similar to \textbf{S5}. For the reflexivity, it is sufficient to prove $(w,w)\in (\mathop{\sim_i})^*$ for all $w\in W^*$. This follows from $\mathrm{T_I}$. For the symmetricity, it is sufficient to prove that for all $w,v\in W$, if $(w,v)\in (\mathop{\sim_i})^*$, then $(v,w)\in (\mathop{\sim_i})^*$. This follows from $\mathrm{T_I}$ and $\mathrm{5_I}$. For the transitivity, it is sufficient to prove that, for all $w,v,u\in W^*$, if $(w,v)\in (\mathop{\sim_i})^*$ and $(v,u)\in (\mathop{\sim_i})^*$, then $(w,u)\in (\mathop{\sim_i})^*$. This follows through $I_i\varphi\to I_i I_i\varphi$ from $\mathrm{K_I, T_I, 5_I}$, and $\mathrm{GI}$.
  \item For $(\mathscr{A}_i)^*$, it is sufficient to prove that 
  for all $w\in W^*$, if $(w,v)\in (\mathop{\sim_i})^*$ then $(\mathscr{A}_i)^*(w) = (\mathscr{A}_i)^*(v)$. Suppose that $(w,v)\in (\mathop{\sim_i})^*$ and $p\in (\mathscr{A}_i)^*(w)$. It follows from the definition of $(\mathscr{A}_i)^*(w)$ and $\mathrm{IA}$ that $I_i A_i p$. From the definition of $(\mathop{\sim_i})^*$, $A_i p\in v$. Thus, $p\in(\mathscr{A}_i)^*(v)$. The converse direction can be proved from $\mathrm{INA}$.
\end{itemize} 
\noindent
\end{proof}

\begin{lemma}
  Relation $(\mathop{\approx_i})^*$ on each canonical model is an A-equivalence relation on an epistemic model with awareness.
\end{lemma}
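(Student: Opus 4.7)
The plan is to verify that the canonical relation $(\approx_i)^*$ fulfills both clauses of Definition 3, namely that whenever $(\Gamma, \Delta) \in (\approx_i)^*$: (a) $(\mathscr{A}_i)^*(\Gamma) = (\mathscr{A}_i)^*(\Delta)$, and (b) for every $p \in (\mathscr{A}_i)^*(\Gamma)$, $\Gamma \in V^*(p)$ iff $\Delta \in V^*(p)$. As a preliminary step I would also record, in order to license the symmetry appeals below, that $(\approx_i)^*$ is itself an equivalence relation on $W^*$: reflexivity from $\mathrm{T_{[\approx]}}$, symmetry from $\mathrm{T_{[\approx]}}$ together with $\mathrm{5_{[\approx]}}$, and transitivity from the derivable $[\approx]_i\varphi\to[\approx]_i[\approx]_i\varphi$ obtained from $\mathrm{K_{[\approx]}}, \mathrm{T_{[\approx]}}, \mathrm{5_{[\approx]}}$, and $\mathrm{G[\approx]}$, mirroring exactly the template used for $(\sim_i)^*$ in Lemma 6.

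For clause (b), I would appeal directly to $\mathrm{AA[\approx]}$. If $A_i p \in \Gamma$ and $p \in \Gamma$, then maximal consistency together with $A_i p \wedge p \to [\approx]_i p$ forces $[\approx]_i p \in \Gamma$, whereupon the definition of $(\approx_i)^*$ yields $p \in \Delta$. The witness $[\approx]_i p$ lies inside the closure $\Phi$ by clause 5 of Definition 9. The converse direction is obtained by running the same argument on $(\Delta, \Gamma)$, which is available by symmetry; this step uses $A_i p \in \Delta$, a fact provided by clause (a).

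For clause (a), the central step is to derive $\vdash A_i p \to [\approx]_i A_i p$. I would instantiate $\mathrm{AA[\approx]}$ with $A_i p$ substituted for $p$ to obtain $A_i A_i p \wedge A_i p \to [\approx]_i A_i p$, and then collapse the antecedent via $\mathrm{AA}$, which supplies $A_i A_i p \leftrightarrow A_i p$. With this in hand, $A_i p \in \Gamma$ forces $[\approx]_i A_i p \in \Gamma$, so $A_i p \in \Delta$ by the definition of $(\approx_i)^*$; symmetry delivers the converse direction, jointly giving $(\mathscr{A}_i)^*(\Gamma) = (\mathscr{A}_i)^*(\Delta)$.

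The main obstacle I foresee is a finite-closure bookkeeping issue. The argument for (a) tacitly requires $[\approx]_i A_i p$ to sit in $\Phi$, whereas clause 5 of Definition 9 only explicitly adds $[\approx]_i p$ for atomic $p$ drawn from subformulas of some $A_i \psi$ already in $\Phi$. If $[\approx]_i A_i p$ is unavailable, the canonical argument must be routed through a closure-friendly equivalent, for instance by exploiting $\mathrm{A[\approx]}$ (which gives $A_i A_i p \leftrightarrow A_i [\approx]_i A_i p$) together with the awareness-introspection axioms $\mathrm{IA}$ and $\mathrm{INA}$, or alternatively by reading $\mathrm{AA[\approx]}$ as an unrestricted schema so that the needed witness is automatically forced into every closure containing $A_i p$.
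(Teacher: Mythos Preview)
Your argument for the equivalence-relation property and for clause (b) (valuation agreement on aware atoms) is exactly the paper's: the paper appeals to $\mathrm{K_{[\approx]}},\mathrm{T_{[\approx]}},\mathrm{5_{[\approx]}},\mathrm{G[\approx]}$ by analogy with $(\sim_i)^*$, then uses $\mathrm{AA[\approx]}$ to push $p$ across $(\approx_i)^*$ and invokes symmetry of $(\approx_i)^*$ for the converse.

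Where you diverge is clause (a). The paper's proof \emph{does not treat the awareness-equality clause at all}; it passes directly from the equivalence-relation step to the valuation-agreement step. So there is no paper-side argument to compare your clause-(a) attempt against, and your observation that the converse of clause (b) silently requires $A_i p\in\Delta$ is sharper than what the paper records.

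Your worry about clause (a) is well placed, and the gap is real. In Table~3, $\mathrm{AA[\approx]}$ is stated with the atomic letter $p$, not a schematic $\varphi$, so instantiating it with $A_i p$ is not licensed; and clause~5 of Definition~9 puts only $[\approx]_i p$ (for atomic $p$) into the closure, never $[\approx]_i A_i p$. More to the point, none of the fallback routes you sketch actually deliver $\vdash A_i p\to[\approx]_i A_i p$: $\mathrm{A[\approx]}$ only moves $[\approx]_j$ \emph{inside} an $A_i$-operator, and $\mathrm{IA}/\mathrm{INA}$ target $I_i$ rather than $[\approx]_i$. The axiom list simply contains no $[\approx]$-analogue of $\mathrm{IA}/\mathrm{INA}$. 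This looks like a lacuna in the paper's axiomatisation and closure definition rather than a defect in your plan; the paper does not confront it, and your proposal correctly isolates the missing ingredient.
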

\begin{proof}
  First, we prove that $(\mathop{\approx_i})^*$ is an equivalence relation. It can be proved in the same way as $(\mathop{\sim_i})^*$ because both operators defining each relation have the axioms corresponding to $\mathrm{K, T},$ and $\mathrm{5}$ axioms. Next, we prove that for all $(w,v)\in (\mathop{\approx_i})^*$ and every $p\in (\mathscr{A}_i)^*(w)$, if $w\in V^*(p)$, then $v\in V^*(p)$ and vice versa. From left to right, suppose that $w\in V^*(p)$ for $p\in (\mathscr{A}_i)^*(w)$, then $p, A_i p\in w$. It follows that $p\in v$ from $\mathrm{AA[\approx]}$ and the definition of $(\mathop{\approx_i})^*$. The converse direction can be proved by that $(\mathop{\approx_i})^*$ is an equivalence relation.
\end{proof}

We prove a corollary of Lemma 6.
\begin{lemma}
  Let $\Phi$ be the closure of a formula, $M^*$ the canonical model for $\Phi$, and $\Gamma$ a consistent set in $\Phi$. For every $\varphi\in\Phi$, $\Gamma\vdash\varphi$ iff $\varphi\in\Delta$ for every maximal consistent set $\Delta \in W^*$ such that $\Gamma\subseteq\Delta$.
\end{lemma}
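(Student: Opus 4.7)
The plan is to prove the two directions separately, with the forward direction being the easier one and the backward direction requiring a careful extension argument that splits on the syntactic shape of $\varphi$.

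For the forward direction, I would assume $\Gamma \vdash \varphi$ and let $\Delta \in W^*$ be any maximal consistent set in $\Phi$ with $\Gamma \subseteq \Delta$. Then $\Delta \vdash \varphi$ as well. Now since $\varphi \in \Phi$, maximality of $\Delta$ forces $\varphi \in \Delta$: otherwise $\Delta \cup \{\varphi\}$ would be a strictly larger subset of $\Phi$ that is still consistent (because derivability of $\varphi$ from $\Delta$ means $\Delta \cup \{\varphi\}$ and $\Delta$ have the same deductive closure), contradicting the maximality clause in Definition 10.

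For the backward direction, I would argue by contraposition: assume $\Gamma \nvdash \varphi$ and construct a maximal consistent $\Delta \in W^*$ with $\Gamma \subseteq \Delta$ and $\varphi \notin \Delta$. The key is to produce a consistent set lying inside $\Phi$ that extends $\Gamma$ and prevents $\varphi$ from being added, then invoke Lemma 6 to extend it to a maximal consistent set. I split into two cases according to the shape of $\varphi$. If $\varphi$ is not a negation, condition 3 of Definition 9 gives $\neg\varphi \in \Phi$, and classical logic together with $\Gamma \nvdash \varphi$ gives consistency of $\Gamma \cup \{\neg\varphi\}$; the resulting maximal extension $\Delta$ cannot contain $\varphi$ without contradicting consistency. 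If $\varphi = \neg\psi$, then $\psi \in \Phi$ by condition 2, and $\Gamma \nvdash \neg\psi$ is equivalent (classically) to the consistency of $\Gamma \cup \{\psi\}$; again the maximal extension $\Delta$ containing $\psi$ must omit $\neg\psi = \varphi$.

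The main obstacle, and the reason the proof is more than a one-line citation of Lindenbaum, is the need to stay inside the finite closure $\Phi$: one cannot freely adjoin $\neg\varphi$ as in the classical canonical model construction. Handling this requires the case distinction above and relies on the specific closure conditions in Definition 9 (especially conditions 2 and 3) to guarantee that whichever ``polarity'' witness we need — either $\neg\varphi$ or the proper subformula $\psi$ when $\varphi = \neg\psi$ — is available within $\Phi$. Once the witness is secured, the appeal to Lemma 6 to obtain a maximal consistent extension is routine, and the two directions combine to give the stated equivalence.
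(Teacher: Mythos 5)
Your proof is correct and follows essentially the same route as the paper's: the forward direction via a finite subset $\Gamma'$ with $\vdash\bigwedge\Gamma'\to\varphi$ plus maximality of $\Delta$ in $\Phi$, and the backward direction by contraposition and an appeal to Lemma 6 (Lindenbaum within the closure). In fact your case split on whether $\varphi$ is a negation is a genuine improvement in rigor: the paper simply asserts that $\Gamma\cup\{\neg\varphi\}$ is a consistent set \emph{in $\Phi$}, which presupposes $\neg\varphi\in\Phi$, and condition~3 of Definition~9 only guarantees this when $\varphi$ is not itself a negation, so your use of the subformula $\psi$ as the witness when $\varphi=\neg\psi$ is exactly what is needed to make the argument airtight.
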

\begin{proof}
  From left to right, suppose that $\Gamma\vdash\varphi$, $\Gamma\subseteq\Delta$, and $\Delta\in W^*$. There exists a finite subset $\Gamma'\subseteq\Gamma$ such that $\vdash \bigwedge\Gamma'\to \varphi$. Since $\Gamma'\subseteq\Delta$, $\varphi\in\Delta$. From right to left, we prove it by contraposition. It is sufficient to prove that if $\Gamma\nvdash\varphi$, then there exists a maximal consistent set $\Delta$ such that $\Gamma\subseteq\Delta$, $\Delta\in W^*$, and $\varphi\not\in\Delta$. By the assumption, $\Gamma\cup\{\neg\varphi\}$ is a consistent set in $\Phi$. Thus, we obtain a maximal consistent set $\Delta$ in $\Phi$ such that $\Gamma\subseteq\Delta$ and $\varphi\not\in\Delta$ by Lemma 6.
\end{proof}

The next task is to define a $[\circ^+]_i$-path and $\varphi$-path on the canonical model $M^*$ for a closure. 
\begin{df}
  Let $\Phi$ be the closure of a formula and $M^*$ the canonical model for $\Phi$. A $[\circ^+]_i$-path from $\Gamma$ is a sequence $\Gamma_0,\dots,\Gamma_n$ of maximal consistent sets in $W^*$ such that $(\Gamma_k,\Gamma_{k+1}) \in (\mathop{\sim_i})^*\circ (\mathop{\approx_i})^*$ for all $k$, where $0\leq k\leq n$ and $\Gamma_0 = \Gamma$. The length of $\Gamma_0,\dots,\Gamma_n$ is $n$. A $\varphi$-path is a sequence $\Gamma_0,\dots,\Gamma_n$ of maximal consistent sets in $W^*$ such that $\varphi\in\Gamma_k$ for all $k$, where $0\leq k\leq n$.
\end{df}  

\begin{lemma}
  Let $\Phi$ be the closure of a formula, $M^*$ the canonical model for $\Phi$, and $\Gamma$ and $\Delta$ maximal consistent sets in $W^*$. If $\bigwedge\Gamma \wedge \neg[\approx]_i I_i \neg \bigwedge\Delta$ is consistent, then $(\Gamma,\Delta)\in (\mathop{\sim_i})^*\circ (\mathop{\approx_i})^*$ in the canonical model $M^*$. 
\end{lemma}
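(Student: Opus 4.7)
The plan is to construct the intermediate maximal consistent set $\Theta$ that serves as the $(\mathop{\approx_i})^*$-bridge from $\Gamma$ to $\Delta$, i.e., $(\Gamma,\Theta)\in(\mathop{\approx_i})^*$ and $(\Theta,\Delta)\in(\mathop{\sim_i})^*$. The key idea is to read $\neg[\approx]_i I_i \neg\bigwedge\Delta$ as asserting the existence of a $\approx_i$-successor at which $\bigwedge\Delta$ is possible via $\sim_i$, but since we must stay inside the closure $\Phi$, we cannot simply throw $\neg I_i\neg\bigwedge\Delta$ into $\Theta$. Instead I would encode its effect using the set
\[
  \Sigma := \{\varphi\in\Phi : [\approx]_i\varphi\in\Gamma\}\ \cup\ \{\neg I_i\chi : I_i\chi\in\Phi,\ \chi\notin\Delta\}.
\]
Note $\Sigma\subseteq\Phi$ because the first part consists of subformulae of members of $\Gamma$ (closure condition~2), and the second part uses closure condition~3 to put $\neg I_i\chi\in\Phi$ whenever $I_i\chi\in\Phi$.

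The central step is showing $\Sigma$ is consistent. Suppose otherwise; by compactness in the propositional calculus there are $\psi_1,\dots,\psi_n\in\{\varphi:[\approx]_i\varphi\in\Gamma\}$ and $I_i\chi_1,\dots,I_i\chi_m\in\Phi$ with each $\chi_k\notin\Delta$ such that $\vdash \bigwedge_j\psi_j \to \bigvee_k I_i\chi_k$. For each $k$, maximality of $\Delta$ in $\Phi$ together with closure condition~3 yields $\chi_k\notin\Delta\Rightarrow\vdash\bigwedge\Delta\to\neg\chi_k$, equivalently $\vdash\chi_k\to\neg\bigwedge\Delta$; applying $\mathrm{GI}$ and $\mathrm{K_I}$ gives $\vdash I_i\chi_k\to I_i\neg\bigwedge\Delta$. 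Combining, $\vdash\bigwedge_j\psi_j\to I_i\neg\bigwedge\Delta$, and then $\mathrm{G[\approx]}$ with $\mathrm{K_{[\approx]}}$ delivers $\vdash[\approx]_i\bigwedge_j\psi_j\to[\approx]_i I_i\neg\bigwedge\Delta$. Because each $[\approx]_i\psi_j\in\Gamma$, a further application of $\mathrm{K_{[\approx]}}$ (to collect the conjunction under $[\approx]_i$) yields $\bigwedge\Gamma\vdash[\approx]_i I_i\neg\bigwedge\Delta$, contradicting the assumed consistency of $\bigwedge\Gamma\wedge\neg[\approx]_i I_i\neg\bigwedge\Delta$.

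With consistency in hand, invoke Lemma~6 to extend $\Sigma$ to a maximal consistent set $\Theta\in W^*$ in $\Phi$. The relation $(\Gamma,\Theta)\in(\mathop{\approx_i})^*$ is then immediate from the first half of $\Sigma$ together with the definition of $(\mathop{\approx_i})^*$ in Definition~12. For $(\Theta,\Delta)\in(\mathop{\sim_i})^*$, let $I_i\varphi\in\Theta$; then $I_i\varphi\in\Phi$, and if $\varphi\notin\Delta$ the second half of $\Sigma$ forces $\neg I_i\varphi\in\Sigma\subseteq\Theta$, contradicting consistency of $\Theta$. Hence $\varphi\in\Delta$, as required.

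The main obstacle I anticipate is purely the consistency argument in the middle paragraph: one must carefully push a disjunction $\bigvee_k I_i\chi_k$ through the modality $I_i$ (which does not distribute over disjunction) by going via $\neg\bigwedge\Delta$, and then lift the result under $[\approx]_i$ while keeping the whole argument inside the finite closure $\Phi$. Once that bookkeeping is done, the extraction of $\Theta$ via Lindenbaum and the two verifications are routine.
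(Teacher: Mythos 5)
Your proposal is correct, and it takes a genuinely different---and more detailed---route than the paper's own proof. The paper reduces the claim to showing that $[\approx]_i I_i\varphi\in\Gamma$ implies $\varphi\in\Delta$, establishing this by the same modal lifting you use (from $\neg\varphi\in\Delta$ derive $\vdash\varphi\to\neg\bigwedge\Delta$, push it under $I_i$ via $\mathrm{GI}$/$\mathrm{K_I}$ and then under $[\approx]_i$, and contradict the consistency of $\bigwedge\Gamma\wedge\neg[\approx]_i I_i\neg\bigwedge\Delta$); it then treats this inclusion as \emph{sufficient} for $(\Gamma,\Delta)\in(\mathop{\sim_i})^*\circ(\mathop{\approx_i})^*$, leaving implicit the existence of the intermediate world required by the definition of the composition. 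You instead carry out the full existence-lemma construction: you define $\Sigma=\{\varphi\in\Phi:[\approx]_i\varphi\in\Gamma\}\cup\{\neg I_i\chi: I_i\chi\in\Phi,\ \chi\notin\Delta\}$, verify $\Sigma\subseteq\Phi$ via closure conditions 2 and 3, prove its consistency by exactly the modal lifting above (routed through $\neg\bigwedge\Delta$ to avoid distributing $I_i$ over a disjunction), extend to $\Theta\in W^*$ by Lemma~6, and check both $(\Gamma,\Theta)\in(\mathop{\approx_i})^*$ and $(\Theta,\Delta)\in(\mathop{\sim_i})^*$ directly. What your approach buys is precisely the step the paper glosses over: in a finitary canonical model, the canonical relation induced by the compound modality $[\approx]_i I_i$ need not coincide a priori with the composition $(\mathop{\sim_i})^*\circ(\mathop{\approx_i})^*$ of the two canonical relations, and your explicit witness $\Theta$ is what closes that gap. (Two minor remarks: the appeal to propositional compactness is unnecessary since $\Phi$, hence $\Sigma$, is finite by Lemma~5; and you should note explicitly that $\bigwedge_j[\approx]_i\psi_j$ yields $[\approx]_i\bigwedge_j\psi_j$, which is derivable from $\mathrm{K_{[\approx]}}$ and $\mathrm{G[\approx]}$.)
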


\begin{proof}
  It is sufficient to prove that if $\bigwedge\Gamma \wedge \neg[\approx]_i I_i \neg \bigwedge\Delta$ is consistent and $[\approx]_i I_i\varphi \in \Gamma$, $\varphi\in\Delta$. We prove it by the contradiction. From the assumptions, $[\approx]_i I_i\varphi \wedge \neg[\approx]_i I_i\neg \bigwedge\Delta$ is consistent. Suppose that $\varphi\not\in\Delta$, then $\neg\varphi\in\Delta$. It follows that $[\approx]_i I_i\varphi \wedge\neg[\approx]_i I_i\varphi$ is consistent. This is a contradiction. Thus, $\varphi\in\Delta$. 
\end{proof}

\begin{lemma}
  Let $\Phi$ be the closure of a formula and $M^*$ the canonical model for $\Phi$. If $[\circ^+]_i\varphi\in\Phi$ and $\Gamma \in W^*$, then $[\circ^+]_i\varphi\in\Gamma$ iff every $[\circ^+]_i$-path from $\Gamma$ is a $\varphi$-path and $[\circ^+]_i\varphi$-path. 
\end{lemma}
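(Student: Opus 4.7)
The plan is to handle the two directions separately. The backward (``if'') direction is essentially immediate from the definition of a $[\circ^+]_i\varphi$-path: the singleton sequence $\Gamma_0 = \Gamma$ already qualifies as a $[\circ^+]_i$-path from $\Gamma$, because the condition on consecutive pairs is vacuous for length $0$; and even under a length-$\geq 1$ reading, the reflexivity of $(\sim_i)^*$ and $(\approx_i)^*$ (established in Lemmas 7 and 8) makes $(\sim_i)^* \circ (\approx_i)^*$ reflexive, so that $\Gamma,\Gamma$ is a one-step $[\circ^+]_i$-path. By hypothesis this path is a $[\circ^+]_i\varphi$-path, and therefore $[\circ^+]_i\varphi \in \Gamma$.

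For the forward (``only if'') direction, I would assume $[\circ^+]_i\varphi \in \Gamma$ and argue by induction on $k$ along a given path $\Gamma_0,\dots,\Gamma_n$ that $[\circ^+]_i\varphi \in \Gamma_k$. The $\varphi$-path clause then follows at each $k$ from a single use of $\mathrm{MIX}$, which yields $[\circ^+]_i\varphi \to \varphi$, together with the deductive closure of $\Gamma_k$. The base case $k = 0$ is the hypothesis. For the inductive step, assume $[\circ^+]_i\varphi \in \Gamma_k$. Then $\mathrm{MIX}$ gives $[\approx]_i I_i [\circ^+]_i\varphi \in \Gamma_k$. Decomposing $(\Gamma_k,\Gamma_{k+1}) \in (\sim_i)^* \circ (\approx_i)^*$ as $(\Gamma_k,\Xi) \in (\approx_i)^*$ followed by $(\Xi,\Gamma_{k+1}) \in (\sim_i)^*$ (matching the convention fixed in Lemma 1), the definition of $(\approx_i)^*$ yields $I_i [\circ^+]_i\varphi \in \Xi$, and the definition of $(\sim_i)^*$ then yields $[\circ^+]_i\varphi \in \Gamma_{k+1}$.

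The main point requiring care is the bookkeeping that each of the formulae $[\approx]_i I_i [\circ^+]_i\varphi$, $I_i [\circ^+]_i\varphi$, and $[\circ^+]_i\varphi$ actually lies in the closure $\Phi$, since the canonical relations $(\sim_i)^*$ and $(\approx_i)^*$ can transfer only formulae that belong to $\Phi$. This is exactly guaranteed by closure clause~8 of Definition~9 (which forces $[\approx]_i I_i [\circ^+]_i\varphi \in \Phi$ from $[\circ^+]_i\varphi \in \Phi$) together with the subformula clause~2. I would emphasize that the stronger induction axiom $\mathrm{IND}$ is \emph{not} needed for this lemma; its role is reserved for the truth lemma, where one must construct a $[\circ^+]_i$-path witnessing $\neg[\circ^+]_i\varphi$ and close up a suitable ``bad'' set of maximal consistent sets.
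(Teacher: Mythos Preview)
Your forward ($\Rightarrow$) direction is essentially the paper's: induction along the path, using $\mathrm{MIX}$ to push $[\circ^+]_i\varphi$ (and hence $\varphi$) from $\Gamma_k$ to $\Gamma_{k+1}$ via the definitions of $(\approx_i)^*$ and $(\sim_i)^*$, with the closure bookkeeping handled by clause~8 of Definition~9.

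The backward ($\Leftarrow$) direction is where you and the paper diverge. Your argument is correct for the lemma \emph{as literally stated}: the length-$0$ path $\Gamma$ is a $[\circ^+]_i$-path, so by hypothesis $[\circ^+]_i\varphi\in\Gamma$. But the paper's proof establishes something strictly stronger: it assumes only that every $[\circ^+]_i$-path from $\Gamma$ is a $\varphi$-path (dropping the $[\circ^+]_i\varphi$-path conjunct) and derives $[\circ^+]_i\varphi\in\Gamma$ by the standard fixed-point argument for transitive closures. Concretely, it sets $\chi=\bigvee_{\Delta\in S}\bigwedge\Delta$, where $S$ collects all maximal consistent sets whose $[\circ^+]_i$-paths are $\varphi$-paths, shows $\vdash\chi\to\varphi$ and $\vdash\chi\to[\approx]_i I_i\chi$ (the latter via Lemma~10), and then applies $\mathrm{G[\circ^+]}$, $\mathrm{IND}$, and $\mathrm{K_{[\circ^+]}}$ to obtain $\vdash\bigwedge\Gamma\to[\circ^+]_i\varphi$.

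This stronger form is what Lemma~12 actually needs. In the truth-lemma case for $[\circ^+]_i\psi$, the induction hypothesis applies only to the strict subformula $\psi$, so from $M^*,w\vDash[\circ^+]_i\psi$ one obtains that every $[\circ^+]_i$-path from $w$ is a $\psi$-path; concluding it is also a $[\circ^+]_i\psi$-path would require the truth lemma for $[\circ^+]_i\psi$ itself, which is circular. (The paper's own wording in Lemma~12 glosses over this.) So your claim that $\mathrm{IND}$ is ``not needed for this lemma'' and is ``reserved for the truth lemma'' is a reorganization, not an elimination: the $\chi$/$\mathrm{IND}$ argument you sketch is precisely what the paper carries out, only the paper places it inside the present lemma rather than in Lemma~12.
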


\begin{proof}
  ($\Rightarrow$) We prove it by induction on the length of a $[\circ^+]_i$-path.
  \begin{itemize}
    \item For the base case, suppose that the length of a $[\circ^+]_i$-path is $0$, $[\circ^+]_i \varphi\in\Phi$, and $[\circ^+]_i\varphi\in\Gamma$. $\Gamma =\Gamma_0 = \Gamma_n$. By $\mathrm{MIX}$, $\varphi\in\Gamma$.
    \item For induction steps, suppose that the length of a $[\circ^+]_i$-path is $k+1$, $[\circ^+]_i\varphi\in\Phi$, and $[\circ^+]_i\varphi\in\Gamma$. By the induction hypothesis, $[\circ^+]_i\varphi\in\Gamma_k$. It follows that $\varphi, [\circ^+]_i\varphi\in \Gamma_{k+1}$ from $\mathrm{MIX}$ and the definitions of $(\mathop{\sim_i})^*$ and $(\mathop{\approx_i})^*$. 
  \end{itemize}
  ($\Leftarrow$) Let $S([\circ^+]_i,\varphi)$ be a set of maximal consistent sets $\Delta$ in $W^*$ such that every $[\circ^+]_i$-path from $\Delta$ is a $\varphi$-path. We introduce the following formula: \[\chi = \bigvee_{\Delta\in S([\circ^+]_i,\varphi)}\bigwedge \Delta.\] 
  Suppose that every $[\circ^+]_i$-path from $\Gamma$ is a $\varphi$-path. As the first step, we prove the following: 
  \begin{align*}
  (1)\ \vdash &\bigwedge\Gamma\to \chi; 
  \quad (2)\ \vdash \chi\to\varphi;
  \quad (3)\ \vdash \chi\to [\approx]_i I_i\chi. 
  \end{align*}
  \begin{itemize}
  \item For (1), since $\Gamma\in S([\circ^+]_i,\varphi)$, $\vdash \bigwedge\Gamma\to \chi$.
  \item For (2), $\varphi\in\Delta$ for every $\Delta \in S([\circ^+]_i,\varphi)$ since every $[\circ^+]_i$-path from $\Delta$ is a $\varphi$-path. Thus, $\varphi$ is derived from $\chi$. 
  \item For (3), we prove it by contradiction. Suppose $\chi\wedge\neg[\approx]_i I_i\chi$ is consistent, then there exists $\Delta$ such that $\bigwedge\Delta\wedge\neg[\approx]_i I_i\chi$ is consistent because of the construction of $\chi$. A $W^*$ can be regarded as the whole set of combinations satisfying a condition of formulae in $\Phi$, and $\chi$ as a representation of a particular set of the combinations. The formula $\neg\bigvee_{\Lambda\in W^*\setminus S([\circ^+]_i,\varphi)} \bigwedge\Lambda$ is equivalent to $\chi$ because a particular set of combinations represented by $\chi$ is the 
  complement of the set containing all the other combinations. Therefore, $\bigwedge\Delta\wedge\neg[\approx]_i I_i \neg\bigvee_{\Lambda\in W^*\setminus S([\circ^+]_i,\varphi)} \bigwedge\Lambda$ is consistent. There exists $\Lambda$ such that $\bigwedge\Delta\wedge\neg[\approx]_i I_i \neg\bigwedge\Lambda$ is consistent. By Lemma 10, $(\Delta,\Lambda)\in (\mathop{\sim_i})^*\circ(\mathop{\approx_i})^*$ and there exists a $[\circ^+]_i$-path from $\Delta$ but not a $\varphi$-path. This is a contradiction. Thus, $\vdash \chi\to [\approx]_i I_i\chi$.
  \end{itemize}
  By $(3)$ and $\mathrm{G[\circ^+]}$, $\vdash [\circ^+]_i(\chi\to [\approx]_i I_i\chi)$. It follows that $\vdash\chi\to [\circ^+]_i \chi$ from $\mathrm{IND}$. By $(1)$, $\vdash\bigwedge \Gamma\to [\circ^+]_i\chi$. By $(2)$, $G[\circ^+]$, and $K_{[\circ^+]}$, $\vdash\bigwedge \Gamma\to [\circ^+]_i\varphi$. Thus, $[\circ^+]_i\varphi\in\Gamma$.
\end{proof}
  
We proceed to prove the truth lemma: in a canonical model, a true formula at a world is included in that world.
\begin{lemma}
  Let $\Phi$ be the closure of a formula and $M^*$ the canonical model for $\Phi$. For all $w\in W^*$ and every $\varphi\in \Phi$, $M^*,w\vDash\varphi$ iff $\varphi\in w$.
\end{lemma}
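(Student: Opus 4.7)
The plan is to prove the truth lemma by induction on the structure of $\varphi \in \Phi$. The base case $\varphi = p$ is immediate from the definition of $V^*$, and the Boolean cases follow from the maximality of $w$ in $\Phi$ together with closure conditions~2 and~3, which guarantee that the relevant subformulae and their negations remain in $\Phi$ so that maximality forces exactly one of $\psi$, $\neg\psi$ to lie in $w$.

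For the awareness case $\varphi = A_i\psi$, I would first establish, using the axioms $\mathrm{AN}$, $\mathrm{AC}$, $\mathrm{AA}$, $\mathrm{AI}$, $\mathrm{A[\approx]}$, $\mathrm{A[\circ^+]}$ and $\mathrm{AE}$, that $\vdash A_i\psi \leftrightarrow \bigwedge_{p\in At(\psi)} A_i p$. Closure conditions~2 and~4 ensure that each $A_i p$ with $p \in At(\psi)$ lies in $\Phi$ whenever $A_i\psi$ does, so Lemma~9 lets this biconditional be read off inside a maximal consistent $w \subseteq \Phi$. Hence $A_i\psi \in w$ iff $A_i p \in w$ for every $p \in At(\psi)$ iff $At(\psi) \subseteq (\mathscr{A}_i)^*(w)$ iff $M^*, w \vDash A_i\psi$.

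For the knowledge-like modalities $I_i\psi$ and $[\approx]_i\psi$, I would adapt the standard canonical-model argument for $\mathbf{S5}$ to the finite closure setting. The forward direction is immediate from the definition of $(\sim_i)^*$ (respectively $(\approx_i)^*$) and the inductive hypothesis. For the backward direction, if $I_i\psi \notin w$, I would show that $\{\chi \mid I_i\chi \in w\} \cup \{\neg\psi\}$ is consistent within $\Phi$, exploiting $\mathrm{K_I}$, $\mathrm{T_I}$, $\mathrm{5_I}$ and $\mathrm{GI}$; here closure conditions~3 and~6 are exactly what keeps $\neg\psi$ and the iterated forms $I_i I_i\psi$, $I_i\neg I_i\psi$ available inside $\Phi$. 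Lemma~6 then extends this consistent set to a maximal consistent $v \in W^*$ with $(w,v) \in (\sim_i)^*$ and $\psi \notin v$, and the inductive hypothesis yields $M^*, v \nvDash \psi$. The case of $[\approx]_i\psi$ is completely parallel, using closure condition~7 and the axioms $\mathrm{K_{[\approx]}}$, $\mathrm{T_{[\approx]}}$, $\mathrm{5_{[\approx]}}$, $\mathrm{G[\approx]}$.

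The case $\varphi = [\circ^+]_i\psi$ is the central difficulty, but the heavy lifting has already been done by the preceding path-characterization lemma, which equates $[\circ^+]_i\psi \in w$ with every $[\circ^+]_i$-path from $w$ being both a $\psi$-path and a $[\circ^+]_i\psi$-path. Combining this with the inductive hypothesis, applied at each world along such a path, yields $M^*, w \vDash [\circ^+]_i\psi$ iff $[\circ^+]_i\psi \in w$; closure condition~8 is what guarantees $[\approx]_i I_i [\circ^+]_i\psi$ stays in $\Phi$ so this reasoning remains internal to $\Phi$. Finally, $\varphi = E_i\psi$ reduces, via axiom $\mathrm{EA[\circ^+]}$ and closure condition~9, to the conjunction of the already-treated $A_i\psi$ and $[\circ^+]_i\psi$ cases. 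The main obstacle is the $[\circ^+]$ case, but with the path-characterization lemma in hand it becomes essentially bookkeeping; the residual delicacy throughout is merely to verify that each inductive step stays within the finite closure $\Phi$, which is precisely what the nine closure conditions were engineered to secure.
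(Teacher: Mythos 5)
Your proposal is correct and follows essentially the same route as the paper: induction on formula structure, reduction of $A_i\psi$ to atomic awareness via the $\mathrm{AN}$--$\mathrm{AE}$ axioms, the standard canonical-model argument for $I_i$ and $[\approx]_i$ (your existence-lemma formulation is just the contrapositive of the paper's use of Lemma~9), the path-characterization lemma for $[\circ^+]_i$, and decomposition of $E_i\psi$ via $\mathrm{EA[\circ^+]}$. The only cosmetic difference is that the paper handles the awareness case by an inner induction on $\psi$ rather than by deriving $\vdash A_i\psi\leftrightarrow\bigwedge_{p\in At(\psi)}A_i p$ outright, which amounts to the same thing.
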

\begin{proof}
  We prove this by induction on the structure of formulae. The proofs of the base case and the logical connectives are straightforward. 
\begin{itemize}
  \item For the case of $A_i \psi$, we prove it by induction on the structure of $\psi$.
  \begin{itemize}
  \item For the base case, suppose that $M^*,w\vDash A_i p$, then $p\in\{q\mid A_i q\in w\}$ and $A_i p\in w$. As for the converse direction, suppose that $A_i p \in w$. This means that $(\mathscr{A}_i)^*(w)$ contains $p$. Thus, $M^*,w\vDash A_i p$.
  \item For the other cases, we obtain the desired proof by induction hypothesis and decomposing the formula using corresponding axioms: $\mathrm{AN}$, $\mathrm{AC}$, $\mathrm{AA}$, $\mathrm{AI}$, $\mathrm{A[\approx]}$, $\mathrm{A[\circ^+]}$, and $\mathrm{AE}$. We demonstrate the proof for $A_i\neg\chi$.
    \begin{itemize}
      \item Suppose that $M^*,w\vDash A_i\neg\chi$, which means that $At(\neg \chi)\subseteq \{q\mid A_i q\in w\}$. Since $At(\neg \chi) = At(\chi)$, $M^*,w\vDash A_i\chi$. It follows that $A_i\neg \chi\in w$ from the induction hypothesis and $\mathrm{AN}$. The converse direction can be proved similarly.
    \end{itemize}
  \end{itemize}
  \item For the case of $I_i \psi$ and $[\approx]_i\psi$, it is proved in a strategy similar to \textbf{S5}. We only demonstrate the case of $I_i\psi$ here. From left to right, suppose that $M^*,w\vDash I_i\psi$, which means that for all $v$ such that $\{\chi\mid I_i\chi\in w\}\subseteq v$, $M^*,v\vDash \psi$. It follows from the induction hypothesis that for all $v$ including $w$ itself, $\psi\in v$. By Lemma 9, $\{\chi\mid I_i\chi\in w\}\vdash\psi$. There exists a finite subset $\{\chi_1,\dots,\chi_n\}$ of $\{\chi\mid I_i\chi\in w\}$ such that $\vdash (\chi_1\wedge\dots\wedge \chi_n)\to \varphi$. It follows that $\vdash (I_i\chi_1\wedge\dots\wedge I_i\chi_n) \to I_i\varphi$ from $\mathrm{GI}$ and $\mathrm{K_I}$. Since a set $\{I_i\chi_1,\dots,I_i\chi_n\}$ is included in $w$, $w\vdash I_i\psi$. Thus, $I_i\psi\in w$. The converse direction is straightforward.
  \item For the case of $[\circ^+]_i\psi$, suppose that $M^*,w\vDash [\circ^+]_i\psi$, which means that for all $v$ such that $(w,v)\in ((\mathop{\sim_i})^*\circ (\mathop{\approx_i})^*)^+$, $M^*,v\vDash \psi$. This means that every $[\circ^+]_i$-path from $w$ is a $\psi$-path and $[\circ^+]_i\psi$-path. By Lemma 11, $[\circ^+]_i\psi\in w$. Similarly, the converse direction can be proved.
  \item For the case of $E_i\psi$, it is proved by decomposing it into $A_i\psi$ and $[\circ^+]_i\psi$. Suppose that $M^*,w\vDash E_i\psi$, then $M^*,w\vDash A_i\psi$ and $M^*,w\vDash [\circ^+]_i\psi$. It follows from the proofs in terms of $A_i$ and $[\circ^+]_i$ that $A_i\psi\in w$ and $[\circ^+]_i\psi\in w$. By $\mathrm{EA[\circ^+]}$, $E_i\psi\in w$. Similarly, the converse direction can be proved.
\end{itemize}
\noindent
\end{proof}

\begin{lemma}
    Let $\Phi$ be the closure of a formula and $\Gamma$ a maximal consistent set in $\Phi$. For every $\varphi\in\Phi$, if $\varphi\in \Gamma$ for every maximal consistent set $\Gamma$ in $\Phi$, then $\vdash\varphi$.
\end{lemma}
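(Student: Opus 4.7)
The plan is to argue by contraposition. I assume $\not\vdash \varphi$ and construct a maximal consistent set $\Gamma$ in $\Phi$ with $\varphi \notin \Gamma$, contradicting the hypothesis and thereby yielding $\vdash \varphi$.

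First I would fix a ``negation witness'' for $\varphi$ that lies inside $\Phi$, splitting on whether $\varphi$ is itself a negation. If $\varphi$ is not of the form $\neg\psi$, then clause~3 of Definition~13 supplies $\neg\varphi \in \Phi$, and I take the singleton $\{\neg\varphi\}$. If instead $\varphi = \neg\psi$, then clause~2 (closure under subformulae) gives $\psi \in \Phi$, and I take $\{\psi\}$. In either case the chosen singleton is a subset of $\Phi$, and it is consistent: otherwise $\bot$ would be derivable from it, and by classical propositional reasoning we would obtain $\vdash\varphi$, contradicting the assumption. Hence the singleton qualifies as a consistent set in $\Phi$ in the sense of Definition~14.

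Next I would apply Lemma~6 (the in-closure Lindenbaum lemma) to extend this singleton to a maximal consistent set $\Gamma \subseteq \Phi$. Consistency of $\Gamma$ then forces $\varphi \notin \Gamma$: in the first case $\neg\varphi \in \Gamma$ means $\Gamma \cup \{\varphi\} \vdash \bot$; in the second case $\psi \in \Gamma$ means $\Gamma \cup \{\neg\psi\} = \Gamma \cup \{\varphi\}$ is likewise inconsistent. Thus $\Gamma$ witnesses the failure of the antecedent of the lemma, completing the contraposition.

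I do not expect any genuine obstacle here. The statement is essentially the special case of Lemma~9 with the empty context, since $\emptyset \vdash \varphi$ reduces to $\vdash \varphi$ and every maximal consistent set in $\Phi$ trivially extends $\emptyset$; one could simply invoke Lemma~9 with $\Gamma := \emptyset$. The only delicate point is keeping the witness formula inside the closure $\Phi$, which is why the case split on whether $\varphi$ is already a negation is needed, and which is handled by clauses~2 and~3 of Definition~13 as above.
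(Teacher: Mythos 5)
Your proof is correct and follows essentially the same route as the paper: contraposition plus the in-closure Lindenbaum lemma (Lemma~6) to produce a maximal consistent set omitting $\varphi$. Your case split on whether $\varphi$ is itself a negation is in fact more careful than the paper's own one-line proof, which simply asserts that $\{\neg\varphi\}$ is a consistent set in $\Phi$ even though clause~3 of the closure definition does not guarantee $\neg\varphi\in\Phi$ when $\varphi$ is already of the form $\neg\psi$.
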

\begin{proof}
  We prove it by contraposition. Suppose $\nvdash\varphi$, then $\{\neg \varphi\}$ is a consistent set in $\Phi$. By Lemma 6, there exists a maximal consistent set in $\Phi$ such that the set does not contain $\varphi$.
\end{proof}

Finally, we prove the completeness theorem. 
\begin{thm}
  For every $\varphi\in\mathcal{L}_{AIL}$, if $\vDash\varphi$, then $\vdash\varphi$.
\end{thm}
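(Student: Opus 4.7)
The plan is to argue by contraposition, using the canonical-model machinery built up in Lemmas 5--13. Given a formula $\varphi \in \mathcal{L}_{AIL}$, I would form its closure $\Phi = cl(\varphi)$, which by Lemma 5 is finite, and then take the canonical model $M^{*}$ for $\Phi$ as defined in Definition 12. Lemmas 7 and 8 already guarantee that $M^{*}$, together with the relation $(\mathop{\approx_i})^{*}$, is a genuine epistemic model with awareness in the sense of Definition 2. Hence any validity in our semantics is in particular true at every world of $M^{*}$.

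Now suppose $\vDash \varphi$. Then $M^{*},w \vDash \varphi$ for every $w \in W^{*}$. Since $\varphi \in \Phi$ by construction, the truth lemma (Lemma 12) applies directly to $\varphi$ and yields $\varphi \in w$ for every maximal consistent set $w$ in $\Phi$. Lemma 13 then converts this universal membership into a deduction: $\vdash \varphi$. This chain (validity $\Rightarrow$ truth at every canonical world $\Rightarrow$ membership in every maximal consistent set in the closure $\Rightarrow$ provability) is the entire argument.

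Equivalently, one can phrase it contrapositively: if $\nvdash \varphi$ then $\{\neg \varphi\} \subseteq \Phi$ is consistent, so by Lemma 6 it extends to a maximal consistent set $w \in W^{*}$ with $\neg \varphi \in w$, hence $\varphi \notin w$; then Lemma 12 gives $M^{*},w \nvDash \varphi$, contradicting $\vDash \varphi$. Either packaging is a few lines.

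The genuine difficulty has already been absorbed into the earlier lemmas: the loss of compactness caused by $[\circ^+]_{i}$ is handled by restricting to the finite closure $\Phi$, and the non-trivial step showing that the transitive-closure modality behaves correctly on the canonical model is exactly Lemma 11 (the $[\circ^+]_i$-path characterization via the $\chi$-construction and the induction axiom $\mathrm{IND}$). Once those are in place, Theorem 3 is a routine Henkin-style wrap-up, so I would not expect any further obstacle in writing it up.
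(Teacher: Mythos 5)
Your proposal is correct and follows essentially the same route as the paper: validity on the canonical model for $cl(\varphi)$ (justified by Lemmas 7 and 8), then the truth lemma (Lemma 12) to get membership of $\varphi$ in every maximal consistent set, then Lemma 13 to conclude $\vdash\varphi$. The contrapositive packaging you sketch is an equivalent rephrasing, not a different argument.
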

\begin{proof}
  Suppose that $\vDash\varphi$. For the canonical model $M^*$ for the closure of $\varphi$, $M^*\vDash\varphi$ by Lemma 7 and Lemma 8. Thus, for every maximal consistent set $w$ in $W^*$, $\varphi\in w$ by Lemma 12. Therefore, $\vdash\varphi$ by Lemma 13.
\end{proof}

\section{Related Work}
There have been studies based on a similar idea to this paper, which is to associate an agent's awareness with indistinguishability among possible worlds or states.
\paragraph*{Bisimulation}
In \cite{van2009awareness,ditmarsch2011becoming,van2018implicit}, the authors provided \textit{awareness bisimulation} and proposed the notion of \textit{speculative knowledge}. The bisimulation relation $\mathfrak{R}[\mathscr{A}(w)]$ is between the models that cannot be distinguished by formulae consisting only of aware atomic propositions. Speculative knowledge is defined in terms of the awareness bisimulation; formally, $\varphi$ is speculatively known ($K^S\varphi$) at world $w$ if and only if $\varphi$ is true in all worlds that are $\mathscr{A}(w)$ awareness bisimilar to some accessible worlds from $w$. The notion captures a true proposition in all accessible worlds for every possible interpretation of all propositions of which an agent is unaware. As with our logic, their logic thereby formalizes indistinguishability among possible worlds induced by a lack of awareness.

However, this knowledge differs from that referred to by any of our operators. At $w$ in $M$ illustrated in Figure 9, $K^S p$ is true. This is because the truth value of the aware proposition $p$ is the same in all worlds that are $\{p\}$ awareness bisimilar to $w$, such as $w'$. In contrast, $\neg[\circ^+]p$ and $\neg E p$ are true at $w$ in $M$.
The observed difference is attributed to how to capture the notion of indistinguishability dependent on awareness/unawareness, that is, whether to capture awareness-indistinguishability at the model level or the world level. The formal comparative analysis is left for future work.
\begin{figure}
  \centering
  \includegraphics[width=13cm]{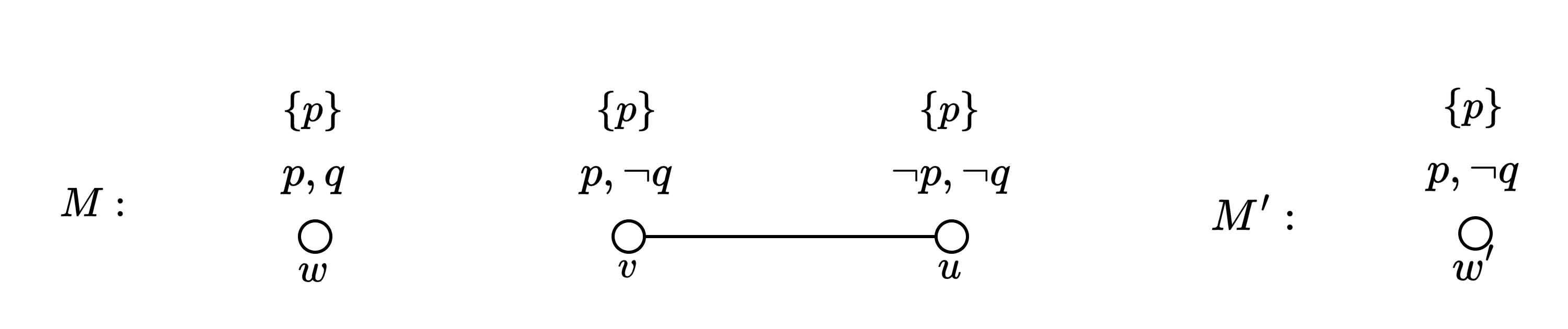}
  \caption{Possible world models with awareness for a single agent, where a solid line represents an IK-accessibility relation. Pointed model $(M,w)$ is $\mathscr{A}(w)$ awareness bisimilar to $(M',w')$. For readability, we omit the reflexivity.}
\end{figure}

\paragraph*{Equivalence relations among worlds}
In \cite{yudai2022-1}, the authors defined an equivalence relation expressing awareness-indistinguishability and, as with us, proposed a different definition of explicit knowledge from FH logic. Their approach inspires us to define our logic. They addressed a notion of awareness from an agent's perspective, that is, what ``agent $i$ is aware of a proposition from another agent $j$'s perspective'' is, representing it as $A^i_j \varphi$. We may construe the notion as information concerning another agent's awareness that an agent can use to obtain her explicit knowledge, and then represent it as explicit knowledge of awareness. Under this reading, the formula can be represented semantically within an epistemic model with awareness. Their logic does not handle a chain of perspectives with a depth of more than three, whereas our logic handles the chain with a finite but arbitrary depth. Furthermore, their semantics adopted a \textit{global definition} of awareness sets; hence, agents' awareness is implicitly known to every agent, that is, for all $w\in W$, $M,w\vDash I_i A^j_k$ for each $i,j,k\in\mathcal{G}$. In contrast, our models adopt a \textit{local definition}, which is one of the differences from their models.

In \cite{udatsu}, the authors also introduced an equivalence relation between worlds sharing the same truth value for aware propositions. Adopting the global definition of awareness, they proposed \textit{static abstraction operator} that refers to equivalence classes representing an `abstraction' of the given model, which operator is the global counterpart of $[\mathop{\approx}]$ in $\mathcal{AIL}$.

Besides these awareness logics, some studies introduced the same type of equivalence relations. The papers \cite{DBLP:conf/dagstuhl/Grossi09,grossi2015ceteris} proposed logics formalizing \textit{ceteris paribus} clauses, which come from the philosophy of science and are understood in the sense `all other things being equal.' The logics have the operator $[X]$ referring to a formula that holds in all worlds sharing the same valuation over a given set $X$ of atomic propositions. The operator works the same as $[\approx]_i$ from a formal aspect; $[\approx]_i\varphi = [\mathscr{A}_i]\varphi$. Furthermore, the authors developed a translation of the $[X]$ operator into a universal operator, embedding their logics into the class of Kripke models with universal relations. This technique can be applied to our logic, provided awareness sets are finite.

\paragraph*{Space-state model}
An \textit{unawareness structure}, proposed in \cite{heifetz2006interactive,heifetza2008canonical}, consists of a lattice of disjoint state-spaces with a partial order. Each space serves as a restriction on vocabulary instead of an awareness set. A state-space $S$ greater than $S'$ means that the states of $S$ describe situations with a richer vocabulary than those of $S'$. The states in a space associated with an agent's awareness (and spaces associated with other agents' awareness from the agent's viewpoint, for multi-agent cases) can be interpreted as a subjective description of an epistemic situation in the agent's mind. This structure (henceforth, an HMS model) provides a formalization of awareness and explicit knowledge in a syntax-free way \cite{DBLP:journals/logcom/BelardinelliR23}, having been applied to game theory \cite{heifetz2013unawareness,heifetz2013dynamic}.

We can view partitions by an A-equivalence relation for each agent as another formalization of their spaces in various vocabularies. For instance, let $p,q$ be atomic propositions and $\mathscr{A}_a(w) = \{p,q\}, \mathscr{A}_b(w) = \{p\}, \mathscr{A}_c(w) = \{q\},$ and $\mathscr{A}_d(w) = \emptyset$ for all $w$. For brevity, let us represent each world as a tuple of the literals; e.g., $(p,q)$ represents the world in which both propositions are true. Their lattice comprises four state-spaces: 
\begin{itemize}
 \item $\{(p,q),(p,\neg q),(\neg p, q),(\neg p,\neg q )\}$;
 \item $\{(p),(\neg p)\}$;
 \item $\{(q),(\neg q )\}$;
 \item $\emptyset$,
\end{itemize}
whereas our model has four types of partitions: 
\begin{itemize}
 \item $\{[(p,q)]_{\mathop{\approx_a}},[(p,\neg q)]_{\mathop{\approx_a}},[(\neg p, q)]_{\mathop{\approx_a}},[(\neg p,\neg q )]_{\mathop{\approx_a}}\}$;
 \item $\{[(p,q)]_{\mathop{\approx_b}},[(\neg p,q)]_{\mathop{\approx_b}}\}$;
 \item $\{[(p,q)]_{\mathop{\approx_c}},[(p,\neg q)]_{\mathop{\approx_c}}\}$;
 \item $\{[(p,q)]_{\mathop{\approx_d}}\}$. 
\end{itemize}
\noindent
A formal comparison would require a translation between IK- and EK-accessibility relations and their possibility correspondences \cite{belardinelli2024implicit}. Nevertheless, it appears that our model forms a single model that overlaps each space.

The studies \cite{belardinelli2024implicit,DBLP:journals/logcom/BelardinelliR23} that explored the relation between HMS models and FH logic showed that the explicit knowledge was equivalent to the intersection of implicit knowledge and awareness, as with FH logic. However, given the structural similarity between an HMS model and our model, their model appears to provide rich structures enough to interpret our operators. Furthermore, the structures proposed in \cite{belardinelli2024implicit,DBLP:journals/logcom/BelardinelliR23} are composed of multiple Kripke models (the former: the category of Kripke models, the latter: a Kripke lattice model), having been designed to correspond to the lattice of spaces in an HMS model. A formal comparison with these logics elucidates the knowledge that HMS models and $\mathcal{AIL}$ capture.

\section{Conclusion}
In this paper, we proposed a notion of indistiguishability among possible worlds due to a lack of awareness; that is, awareness-indistinguishability, which the standard accessibility does not capture. We demonstrated, using a concrete example, that the definition of explicit knowledge in FH logic ($E\varphi\leftrightarrow I\varphi\wedge A\varphi$) may derive undesirable propositions that cannot be considered explicit knowledge, and that the proposed indistinguishability plays an important role in the representation of the knowledge. Therefore, we focused on this indistinguishability and revisited the definition of explicit knowledge, proposing the logic $\mathcal{AIL}$ for representing the knowledge that is a refinement of that in FH logic. The indistinguishability is formally expressed by an equivalence relation between worlds depending on an agent's awareness, called A-equivalence relation. In our semantics, the aware implicit knowledge does not necessarily entail explicit knowledge ($E\varphi\to I\varphi\wedge A\varphi$ but $I\varphi\wedge A\varphi\not\to E\varphi$), which enables us to rule out the undesirable propositions. We then proved that our logical language is more expressive than that of an FH logic (Theorem 1) and showed that the latter is embeddable into $\mathcal{AIL}$ (Theorem 2). Furthermore, we provided the axiomatic system $\mathbf{AIL}$ and proved its soundness and completeness (Theorems 3 and 4).

There are several directions for further research. 
\begin{description}
 \item[Dynamics] One important direction is to explore the information referred to by $[\mathop{\approx}]$ and $[\circ^+]$ more thoroughly. As we explored what kind of knowledge $I$ and $[\circ^+]$ capture with dynamic operators in Subsection 3.3, the introduction of them \cite{van2011logical,van2010dynamics,Fernandez-Fernandez2021-FERAIL-2} sharpens our understanding of these notions. Our definition of A-equivalence relation is model-dependent, which facilitates its extension into a logic involving dynamics. 
 \item[Group knowledge] Another direction is to extend our framework to group knowledge. Incorporating the notion of common knowledge is especially of interest. In strategic situations where a player makes a decision with knowledge of the other's action, it is valuable to represent common knowledge from the viewpoint of an outside observer with complete awareness. At the same time, it would be equally valuable to represent common knowledge from the viewpoint of an agent with incomplete awareness. Our framework could be applied to formalize such common knowledge.
\end{description}
Moreover, clarifying the relations with the logics mentioned in Section 6 would also be a promising strategy for gaining a deeper understanding of what those logics and our logic capture.

\section*{Acknowledgements}
We thank the anonymous reviewer, Nobu-Yuki Suzuki, and Hiromi Inada for their helpful comments. This work was financially supported by 23K21869.

\bibliographystyle{plain}
\bibliography{refs}

\begin{thebibliography}{10}

\bibitem{aagotnes2014logic}
Thomas {\AA}gotnes and Natasha Alechina.
\newblock A logic for reasoning about knowledge of unawareness.
\newblock {\em Journal of Logic, Language and Information}, 23(2):197--217,
  2014.

\bibitem{balbiani2016logical}
Philippe Balbiani, David Fern{\'a}ndez-Duque, and Emiliano Lorini.
\newblock A logical theory of belief dynamics for resource-bounded agents.
\newblock In {\em 15th International Joint Conference on Autonomous Agents and
  Multiagent Systems (AAMAS 2016)}, pages 644--652, 2016.

\bibitem{DBLP:journals/logcom/BelardinelliR23}
Gaia Belardinelli and Rasmus~K. Rendsvig.
\newblock Awareness logic: Kripke lattices as a middle ground between syntactic
  and semantic models.
\newblock {\em J. Log. Comput.}, 33(6):1186--1215, 2023.

\bibitem{belardinelli2024implicit}
Gaia Belardinelli and Burkhard~C. Schipper.
\newblock Implicit knowledge in unawareness structures.
\newblock {\em Synthese}, 204(5):141, 2024.

\bibitem{bergmann2006justification}
Michael Bergmann.
\newblock {\em Justification without awareness: A defense of epistemic
  externalism}.
\newblock Clarendon Press, 2006.

\bibitem{chellas1980modal}
Brian~F. Chellas.
\newblock {\em Modal logic: an introduction}.
\newblock Cambridge University Press, 1980.

\bibitem{fagin1988belief}
Ronald Fagin and Joseph~Y. Halpern.
\newblock Belief, awareness, and limited reasoning.
\newblock {\em Artificial Intelligence}, 34:39--76, 1988.

\bibitem{fagin1995reasoning}
Ronald Fagin, Joseph~Y. Halpern, Yoram Moses, and Moshe~Y. Vardi.
\newblock {\em Reasoning About Knowledge}.
\newblock MIT Press, 1995.

\bibitem{Fernandez-Fernandez2021-FERAIL-2}
Claudia Fern\'{a}ndez{-}Fern\'{a}ndez.
\newblock {\em Awareness in Logic and Epistemology: A Conceptual Schema and
  Logical Study of the Underlying Main Epistemic Concepts}.
\newblock Springer, 2021.

\bibitem{fernandez2021awareness}
Claudia Fern{\'a}ndez-Fern{\'a}ndez and Fernando~R. Vel{\'a}zquez-Quesada.
\newblock Awareness of and awareness that: their combination and dynamics.
\newblock {\em Logic Journal of the IGPL}, 29(4):601--626, 2021.

\bibitem{DBLP:conf/dagstuhl/Grossi09}
Davide Grossi.
\newblock A note on brute vs. institutional facts.
\newblock In {\em Normative Multi-Agent Systems, 15.03. - 20.03.2009}, volume
  09121 of {\em Dagstuhl Seminar Proceedings}, 2009.

\bibitem{grossi2015ceteris}
Davide Grossi, Emiliano Lorini, and Fran{\c{c}}ois Schwarzentruber.
\newblock The ceteris paribus structure of logics of game forms.
\newblock {\em Journal of Artificial Intelligence Research}, 53:91--126, 2015.

\bibitem{grossi2015syntactic}
Davide Grossi and Fernando~R. Vel{\'a}zquez-Quesada.
\newblock Syntactic awareness in logical dynamics.
\newblock {\em Synthese}, 192(12):4071--4105, 2015.

\bibitem{halpern2001alternative}
Joseph~Y. Halpern.
\newblock Alternative semantics for unawareness.
\newblock {\em Games and Economic Behavior}, 37(2):321--339, 2001.

\bibitem{halpern2008interactive}
Joseph~Y. Halpern and Leandro~C. R{\^e}go.
\newblock Interactive unawareness revisited.
\newblock {\em Games and Economic Behavior}, 62(1):232--262, 2008.

\bibitem{heifetz2006interactive}
Aviad Heifetz, Martin Meier, and Burkhard~C. Schipper.
\newblock Interactive unawareness.
\newblock {\em Journal of Economic Theory}, 130(1):78--94, 2006.

\bibitem{heifetza2008canonical}
Aviad Heifetz, Martin Meier, and Burkhard~C. Schipper.
\newblock A canonical model for interactive unawareness.
\newblock {\em Games and Economic Behavior}, 62:304--324, 2008.

\bibitem{heifetz2013dynamic}
Aviad Heifetz, Martin Meier, and Burkhard~C. Schipper.
\newblock Dynamic unawareness and rationalizable behavior.
\newblock {\em Games and Economic Behavior}, 81:50--68, 2013.

\bibitem{heifetz2013unawareness}
Aviad Heifetz, Martin Meier, and Burkhard~C. Schipper.
\newblock Unawareness, beliefs, and speculative trade.
\newblock {\em Games and Economic Behavior}, 77(1):100--121, 2013.

\bibitem{yudai2022-1}
Yudai Kubono, Teeradaj Racharak, and Satoshi Tojo.
\newblock Logic of awareness in agent's reasoning.
\newblock In {\em Proceedings of the 15th International Conference on Agents
  and Artificial Intelligence - Volume 1: ICAART}, pages 207--216, 2023.
  [Errata arXiv:2309.09214].

\bibitem{lorini2020grounding}
Emiliano Lorini and Pengfei Song.
\newblock Grounding awareness on belief bases.
\newblock In {\em International Workshop on Dynamic Logic}, pages 170--186.
  Springer, 2020.

\bibitem{van2015handbook}
Burkhard~C. Schipper.
\newblock Awareness.
\newblock In Hans van Ditmarsch, Wiebe van~der Hoek, Joseph~Y. Halpern, and
  Barteld Kooi, editors, {\em Handbook of epistemic logic}. College
  Publications, 2015.

\bibitem{sipser2006}
Michael Sipser.
\newblock {\em Introduction to the Theory of Computation}.
\newblock Course Technology, second edition, 2006.

\bibitem{udatsu}
Kosuke Udatsu and Katsuhiko Sano.
\newblock Globally rigid awareness logic with static abstraction operators.
\newblock In {\em International Workshop on Logical Aspects in Multi-Agent
  Systems and Strategic Reasoning (LAMAS{\&}SR2024)}, 2024.

\bibitem{van2011logical}
Johan van Benthem.
\newblock {\em Logical dynamics of information and interaction}.
\newblock Cambridge University Press, 2011.

\bibitem{van2010dynamics}
Johan van Benthem and Fernando~R. Vel{\'a}zquez-Quesada.
\newblock The dynamics of awareness.
\newblock {\em Synthese}, 177(1):5--27, 2010.

\bibitem{van2009awareness}
Hans van Ditmarsch and Tim French.
\newblock Awareness and forgetting of facts and agents.
\newblock In {\em 2009 IEEE/WIC/ACM International Joint Conference on Web
  Intelligence and Intelligent Agent Technology}, pages 478--483, 2009.

\bibitem{ditmarsch2011becoming}
Hans van Ditmarsch and Tim French.
\newblock Becoming aware of propositional variables.
\newblock In {\em Logic and Its Applications}, pages 204--218, 2011.

\bibitem{van2018implicit}
Hans van Ditmarsch, Tim French, Fernando~R. Vel{\'a}zquez-Quesada, and
  Y{\`\i}~N. W{\'a}ng.
\newblock Implicit, explicit and speculative knowledge.
\newblock {\em Artificial Intelligence}, 256:35--67, 2018.

\bibitem{van2007dynamic}
Hans van Ditmarsch, Wiebe van~der Hoek, and Barteld Kooi.
\newblock {\em Dynamic Epistemic Logic}.
\newblock Springer Science \& Business Media, 2007.

\bibitem{williamson1992inexact}
Timothy Williamson.
\newblock Inexact knowledge.
\newblock {\em Mind}, 101(402):217--242, 1992.

\end{thebibliography}

\end{document}